\newtheorem {lemma}{Lemma}
\newtheorem {theorem}{Theorem}
\newtheorem {Corollary}{Condition}
\newtheorem{remark}[theorem]{Remark}
\numberwithin{equation}{section}
\numberwithin{equation}{section}
\numberwithin{equation}{section}
\begin{document}

\author{Detlef D\"urr}
\address{Mathematisches Institut, Ludwig-Maximilians Universit\"at, Theresienstr. 39, 80333 M\"unchen, Germany}
\email{d\"urr@math.lmu.de}
\author{G\"unter Hinrichs}
\address{Mathematisches Institut, Ludwig-Maximilians Universit\"at, Theresienstr. 39, 80333 M\"unchen, Germany}
\email{hinrichs@math.lmu.de}
\author{Martin Kolb}
\address{Department of Statistics, Oxford University, 1 South Parks Road, Oxford, OX1 3TG, United Kingdom}
\email{kolb@stats.ox.ac.uk}

\title[On a stochastic Trotter formula]{On a stochastic Trotter formula with application to spontaneous localization models}
\maketitle
\begin{abstract}
We consider the relation between so called continuous localization models -- i.e. non-linear stochastic Schrödinger evolutions -- and the discrete GRW-model of wave function collapse. The former can be understood as scaling limit of the GRW process.  The proof relies on a stochastic Trotter formula , which is of interest in its own right. Our Trotter formula also allows to complement results on existence theory of stochastic Schr\"odinger evolutions by Holevo and Mora/Rebolledo. 
\end{abstract}
\section{Introduction}
The Ghirardi-Rimini-Weber theory (shortly GRW, see \cite{Ghirardi1}, \cite{BassiG} and, for a mathematically rigorous treatment, \cite{Tumulka})  is one of the best known versions of stochastic localization models, aiming at a solution of the measurement problem in quantum mechanics (see e.g. \cite{Bell}). In the GRW theory the Schr\"odinger equation is interrupted by the random collapse of the wave function at randomly chosen times. In this way, one obtains a description of microscopic quantum phenomena where macroscopic superpositions (i.e. Schrödinger cat states) are surpressed.  The GRW-theory contains two new constants of nature: the collapse of the wave function $\psi$ occurs at random times with intensity $\mu$ and the wave function collapses in space by the multiplication with a Gaussian hitting function with spread $1 / \sqrt{\alpha}$.  The centers of the Gaussians are approximately $|\psi|^2$-distributed.

A related class of stochastic Schr\"odinger evolutions, where the collapse of the wave function happens continuously, appears in the theory of continuous quantum measurement and in the theory of open quantum systems (\cite{Belavkin, Breuer, Joos, Kolokoltsov, Mora}).  The Schrödinger equation is replaced by a Hilbert-space valued stochastic differential equation.

A known example is the Diósi equation (\cite{Diosi1, Diosi2, Penrose, Bassi, Kolb})
\begin{equation}\label{QMUPL-Kollaps}
 \begin{split}
 &d\psi_t(x)=-iH\psi_t\textnormal dt + \sqrt{\lambda}\psi_t(x)x\cdot \textnormal d\xi_t - \frac{\lambda}{2}|x|^2\psi_t(x) \textnormal dt \\
 &\psi_0=\phi_0 ,
 \end{split}
\end{equation}
where $(\xi_t)$ is a three-dimensional Wiener process on a filtered probability space $(\Omega, \mathfrak{F, (F_t)}, \mathbb{Q})$ and $\lambda>0$. The physical wave function process arises from $\psi$ by normalization $\psi/\|\psi\|$ and by a change of measure. The normalized wave function obeys then a non-linear equation.

In the physics literature one often considers the equations for the statistical operators $\rho_t:=\mathbb{E}(|\psi_t\rangle\langle\psi_t|)$. The von-Neumann equation is then replaced by so called Lindblad equations (\cite{Lindblad}), which for the GRW-process reads (\cite{BassiG})
\begin{equation*}
 \frac{d}{dt}\rho_t(x,y) = -\frac{i}{\hbar}[H,\rho_t](x,y) -\mu\left(1-e^{-\frac{\alpha}{4}|x-y|^2}\right)\rho_t(x,y)
 \end{equation*}
and (\cite{Diosi2})
\begin{equation*}
 \frac{d}{dt}\rho_t(x,y)= -\frac{i}{\hbar}[H,\rho_t](x,y) -\frac{\lambda}{2}|x-y|^2\rho_t(x,y)
\end{equation*}
for the Diósi equation. We observe that for $|x-y|^2 << 1/\alpha$ and $\mu \cdot {\alpha}={2\lambda}$ both equations are approximately the same: 
\begin{equation*}
 \mu\left(1-e^{-\frac{\alpha}{4}|x-y|^2}\right)\approx \mu\left(1-(1-\frac{\alpha}{4}|x-y|^2\right) = \frac{\lambda}{2}|x-y|^2 .
\end{equation*}
In \cite{Diosi1}, Diosi addresses therefore the question, whether equation \eqref{QMUPL-Kollaps} can be obtained as a scaling limit from the GRW-process by increasing the spread of the hitting function -- $\alpha \to 0$ -- and, at the same time, the collapse frequency $\mu \to \infty$ , such that 
\begin{equation}\label{Skalierung}
\frac{\mu\cdot \alpha}{2}=\lambda=\text{constant}.
\end{equation} 

We shall prove that this is indeed the case.  A natural method for the proof is suggested by the structure of the GRW-process itself: the Schr\"odinger evolution and the collapse mechanism act alternately. This has obvious similarities with the Trotter formula, which is well known and often used in many areas of mathematical physics. In our situation we need a stochastic version of the Trotter formula, since both the collapse times and collapse centers are random.
Stochastic versions of the Trotter formula have been considered since long (\cite{Kurtz}), but the underlying state space has been assumed to be locally compact, which excludes infinite-dimensional Hilbert spaces. More recently, a version of the Trotter formula for a class of stochastic Schrödinger evolutions with deterministic jump times and bounded collapse operators has been established (\cite{Gough}).

In the following section we introduce the GRW-process and the Diosi process. We observe that the collapse mechanism of GRW can be rephrased in terms of of increments of the Wiener process which appears in the Diosi process.This then allows to rephrase the convergence in terms of a Trotter product formula. In section 3 we first prove the Trotter formula.

\section{The scaling limit of GRW}

In the GRW-process  the Schr\"odinger  evolution is interrupted at the random times
\begin{equation}\label{Sprungzeiten}
T_{n,\mu} :=\sum_{k=1}^n\frac{X_k}{\mu},
\end{equation}
where $X_1, X_2 \dots$ are independent exponentially distributed random variables with $\mathbb E X_k=1$. At the times $T_{n,\mu}$ the wave function gets multiplied by a random Gaussian function with spread $\frac{1}{\sqrt\alpha}$, where the centers of the Gaussians are random variables $Y=(Y_1, Y_2, \dots)$ the distribution of which is given by the wave function in the following way:

We collect the randomness in the sample space $\Omega:=\mathbb R^\mathbb N \times (\mathbb R^3)^\mathbb N$ with coordinate projections $X=(X_1, X_2, \dots)$ on the first and $Y=(Y_1, Y_2, \dots)$ on the second factor. We define $\mathbb P(\tau\in\cdot)$ to be a countable product of exponential distributions, $\mathbb P(Y\in\cdot|\tau)$ will be specified via the wavefunction.
Starting with $\phi_0\in L^2(\mathbb{R}^3,\mathbb{C})$, we define recursively
\begin{equation}\label{GRW}
 \begin{split}
 & \psi_{T_{n,\mu}}^\mu(Y_1, \dots, Y_n, X, x) := \left(\frac{\alpha}{\pi}\right)^{\frac{1}{4}} e^{-\frac{\alpha}{2}|x-Y_n|^2} e^{-\frac{i}{\mu}X_n H} \phi_{T_{n-1}}(Y_{T_{1,\mu}}, \dots, Y_{T_{n-1,\mu}}, X, x) \\
 & \phi_{T_{n,\mu}}^\mu:=\frac{\psi_{T_{n,\mu}}}{\|\psi_{T_{n,\mu}}\|_{L^2(\mathbb{R}^3)}}
 \end{split}
\end{equation}
and
\begin{equation}\label{flash}
\begin{split}
 \mathbb{P}_{\alpha,\mu}(Y_{T_{n,\mu}}\in A\mid Y_{T_{1,\mu}}, \dots, Y_{T_{n-1}}, X)
 :=& \int_A \|\psi_{T_{n,\mu}}^\mu(Y_{T_{1,\mu}},\dots,Y_{T_{n-1,\mu}}, X,y,\cdot)\|_{L^2(\mathbb{R}^3)}^2 \textnormal dy \\
 =& \int_A \sqrt{\frac{\alpha}{\pi}}\int e^{-\alpha|x-y|^2}|(e^{-\frac{i}{\mu}X_n H}\phi_{T_{n-1}})(Y, X,  x)|^2\textnormal{d}x\textnormal{d}y ,
\end{split}
\end{equation}
which reads explicitely as
\begin{equation}\label{flash1}
\begin{split}
 \mathbb{P}_{\alpha,\mu}(Y_{T_{1,\mu}}\in A_1) =& \int_{A_1}\|\psi_{T_{1,\mu}}^\mu(y_1, X, \cdot)\|_{L^2(\mathbb{R}^3)}^2 \textnormal dy_1 \\
 =& \int_{A_1} \sqrt{\frac{\alpha}{\pi}}\int e^{-\alpha|x-y_1|^2}|(e^{-\frac{i}{\mu}X_1 H}\phi_0)(x)|^2\textnormal{d}x\textnormal{d}y_1 ,
\end{split}
\end{equation}

\begin{equation*}
\begin{split}
 &\mathbb{P}_{\alpha,\mu}(Y_{T_{1,\mu}}\in A_1, Y_{T_{2,\mu}} \in A_2 \mid X) \\
 =& \int_{\{Y_{T_{1,\mu}}\in A_1\}} \mathbb{P}_{\alpha,\mu}(Y_{T_{2,\mu}}\in A_2\mid Y_{T_{1,\mu}},X)\textnormal{d}\mathbb{P}_{\alpha,\mu} (\cdot\mid X) \\
 \stackrel{\eqref{flash}}{=}& \int_{\{Y_{T_{1,\mu}}\in A_1\}} \int_{A_2}\sqrt{\frac{\alpha}{\pi}}\int e^{-\alpha|x-y_2|^2}
 |(e^{-\frac{i}{\mu}X_2 H}\frac{\psi_{T_{1,\mu}}^\mu}{\|\psi_{T_{1,\mu}}\|})(Y_{T_{1,\mu}}, X,x)|^2 \textnormal dx\textnormal dy_2 \textnormal d\mathbb{P}_{\alpha,\mu}(\cdot\mid X) \\
 \stackrel{\eqref{flash1}}{=}& \int_{A_1} \int_{A_2}\sqrt{\frac{\alpha}{\pi}}\int e^{-\alpha|x-y_2|^2} |(e^{-\frac{i}{\mu}X_2 H}\psi_{T_{1,\mu}})(y_1, X,x)|^2
 \textnormal dx\textnormal dy_2 \frac{1}{\|\psi_{T_{1,\mu}}(y_1,X,\cdot)\|^2} \|\psi_{T_{1,\mu}}^\mu(y_1,X,\cdot)\|^2 \textnormal dy_1 \\
 \stackrel{\eqref{GRW}}=& \int_{A_1} \int_{A_2}\frac{\alpha}{\pi}\int e^{-\alpha|x-y_2|^2} |(e^{-\frac{i}{\mu}X_2 H}e^{-\frac{\alpha}{2}|\cdot-y_1|^2}e^{-\frac{i}{\mu}X_1 H}\phi_0)(x)|^2
 \textnormal dx\textnormal dy_2 \textnormal dy_1
\end{split}
\end{equation*}
and, inductively,
\begin{equation}\label{flashes}
\begin{split}
 & \mathbb{P}_{\alpha,\mu}(Y_{T_{1,\mu}}\in A_1, \dots, Y_{T_{n,\mu}} \in A_n\mid X) \\
 =& \int_{A_1\times\cdots\times A_n}\sqrt{\frac{\alpha}{\pi}}^n \int |(e^{-\frac{\alpha}{2}|\cdot-y_n|^2}e^{-\frac{i}{\mu}X_n H} \cdots
 e^{-\frac{\alpha}{2}|\cdot-y_1|^2}e^{-\frac{i}{\mu}X_1 H}\phi_0)(x)|^2 \textnormal dx\textnormal d(y_1, \dots, y_n) .
\end{split}
\end{equation}
The sequence of wavefunctions is extended continuously according to
\begin{equation}\label{interpol}
\phi_{T_{n,\mu}+s}^\mu:=e^{-isH}\phi_{T_{n,\mu}}^\mu \textnormal{ for } 0<s<X_{n+1}
\end{equation}
between the collapse times. By this and the preceding equation, the process $(\phi_t^\mu)_{t\ge0}$ on the probability space $(\Omega,\mathfrak F,\mathbb P)$ is specified.

In the Diósi process the wave function obeys the stochastic Schr\"odinger equation
\begin{equation}\label{Diosi}
 \begin{split}
 &d\psi_t(x)=-iH\psi_t\textnormal dt + \sqrt{\lambda}\psi_t(x)x\cdot \textnormal d\xi_t - \frac{\lambda}{2}|x|^2\psi_t(x) \textnormal dt,
 \end{split}
\end{equation}
where $H$ is the Hamiltonian which for the purpose of this paper can be thought of as the Schrödinger Hamiltonian $H=-\frac12\Delta + V$. The noise $(\xi_t)$ is a standard $d$-dimensional Wiener process on a filtered probability space $(\Omega, \mathfrak{F, (F_t)}, \mathbb{Q})$ and $\lambda$ a denotes positive intensity parameter.
 
The ``physical'' collapse process is then given by
\begin{displaymath}
\phi_t:=\frac{\psi_t}{\|\psi_t\|_{L^2(\mathbb{R}^d)}}, 
\end{displaymath}
weighted by a new measure defined as
\begin{equation}\label{cooking}
\mathbb{P}_\lambda(A):= \mathbb{E}_\mathbb{Q}(\chi_A\|\psi_t\|^2) \textnormal{ for } A\in\mathfrak{F}_t .
\end{equation}
The martingale property of $\|\psi_t\|^2$ (see \cite{Mora}) ensures that this is indeed a proper definition of a measure on $\Omega$.

We start the comparison of the GRW and the Diosi processes by first neglecting the Schr\"odinger evolution (i.e. we set $H=0$) and the randomness of the collapse times (i.e. $X_k\equiv1$). Then, from \eqref{GRW} and \eqref{flashes},
\begin{align}\label{871}
 &\psi_{\frac{n}{\mu}}^\mu(x)=\left(\frac\alpha\pi\right)^\frac14 e^{-\frac{\alpha}{2}[|x-Y_1|^2+\cdots+|x-Y_n|^2]}\phi_0(x), \\
 &\phi_{\frac{n}{\mu}}^\mu = \frac{\psi_{\frac{n}{\mu}}}{\|\psi_{\frac{n}{\mu}}\|}
\end{align}
and
\begin{equation}\label{872}
 \mathbb{P}_{\alpha,\mu}(Y_1\in A_1,\dots,Y_n\in A_n)
 = \sqrt{\frac{\alpha}{\pi}}^n \int_{A_1\times\cdots\times A_n} \int e^{-\alpha[(x-y_1)^2+\dots+(x-y_n)^2]}|\phi_0(x)|^2
 \textnormal dx\textnormal d(y_1, \dots, y_n) .
\end{equation}
The solution of the collapse part of \eqref{Diosi} is
\begin{equation}\label{Kollapsloesung}
 \psi_t(x) = e^{\sqrt{\lambda} x\cdot\xi_t - \lambda |x|^2t}\phi_0(x) ,
\end{equation}
and can be rewritten as
\begin{equation}\label{10}
\begin{split}
 \psi_{\frac{n}{\mu}}(x) =& e^{\sum_{k=1}^n[\sqrt{\lambda}x\cdot(\xi_{\frac{k}{\mu}}-\xi_{\frac{k-1}{\mu}}) -
 \frac{\lambda}{\mu}|x|^2]} \phi_0(x)
 = e^{-\frac{\lambda}{\mu}\sum_{k=1}^n [|x-\frac{\mu}{2\sqrt{\lambda}}(\xi_{\frac{k}{\mu}}-\xi_{\frac{k-1}{\mu}})|^2 
 + \frac{\mu^2}{4\lambda}|\xi_{\frac{k}{\mu}}-\xi_{\frac{k-1}{\mu}}|^2]} \phi_0(x)
\end{split}
\end{equation}
or, in normalized form and taking into account the scaling \eqref{Skalierung},
\begin{equation}\label{873}
 \phi_{\frac{n}{\mu}}(x)
 = C e^{-\frac{\alpha}{2}\sum_{k=1}^n |x-\frac{\mu}{2\sqrt{\lambda}}(\xi_{\frac{k}{\mu}}-\xi_{\frac{k-1}{\mu}})|^2} \phi_0(x) .
\end{equation}
The increments
\begin{equation}\label{incr}
Z_k:=\frac{\mu}{2\sqrt{\lambda}}(\xi_{\frac{k}{\mu}}-\xi_{\frac{k-1}{\mu}})
\end{equation}
formally correspond to the $Y_k$ from GRW and, by the aid of \eqref{cooking}, their distribution under $\mathbb P_\lambda$ can be determined if one takes into account that they have independent centered normal distributions with $\mathbb{V}(Z_k)=\frac{\mu}{4\lambda}=\frac{1}{2\alpha}$ under $\mathbb{Q}$:
\begin{equation}\label{874}
 \begin{split}
  &\mathbb{P}_\lambda(Z_1\in A_1, \dots, Z_n\in A_n)
  \stackrel{\eqref{cooking}}{=} \mathbb{E}_\mathbb{Q}(\chi_{\{Z_1\in A_1, \dots, Z_n\in A_n\}}\|\psi_\frac{n}{\mu}\|^2) \\
  \stackrel{\eqref{10}}=& \mathbb{E}_\mathbb{Q}(\chi_{\{Z_1\in A_1, \dots, Z_n\in A_n\}} \int
  e^{\sum_{k=1}^n[2\sqrt{\lambda}x\cdot(\xi_{\frac{k}{\mu}}-\xi_{\frac{k-1}{\mu}}) -
  \frac{2\lambda}{\mu}|x|^2]} |\phi_0|^2(x) \textnormal dx ) \\
  \stackrel{\eqref{incr}}=& \mathbb{E}_\mathbb{Q}(\chi_{\{Z_1\in A_1, \dots, Z_n\in A_n\}} \int
  e^{\sum_{k=1}^n[\frac{4\lambda}{\mu}x\cdot Z_k -
  \frac{2\lambda}{\mu}|x|^2]} |\phi_0|^2(x) \textnormal dx ) \\
  \stackrel{\eqref{Skalierung}}=& \int_{A_1\times\dots\times A_n} \int e^{\sum_{k=1}^n[2\alpha x\cdot z_k - \alpha|x|^2]} |\phi_0|^2(x) \textnormal dx
  \sqrt{\frac{\alpha}{\pi}}^n e^{-\alpha\sum_{k=1}^n |z_k|^2} \textnormal d(z_1,\dots,z_n) \\
  =& \sqrt{\frac{\alpha}{\pi}}^n \int_{A_1\times\dots\times A_n} \int e^{-\alpha\sum_{k=1}^n|x-z_k|^2} |\phi_0|^2(x) \textnormal dx
  \textnormal d(z_1,\dots,z_n)
 \end{split}
\end{equation}
\eqref{873} and \eqref{874} agree with \eqref{871} and \eqref{872}. One sees that the GRW collapse process with deterministic times with parameters $(\frac{2\lambda}{\mu},\mu)$ can be obtained by restricting Diosi's collapse process with parameter $\lambda$ to the appropriate discrete instants of time.

We show now that the full GRW-process given by the parameters $\alpha$ and $\mu$ converges in the sense of finite dimensional distributions to the Diosi process with parameter $\lambda$, when $\mu \to \infty$ and $\mu \alpha/2 = \lambda= \text(constant)$.
\begin{theorem}\label{thm:Limes}
Let $H=-\frac12\triangle + V$ with a bounded potential V having bounded first and second derivatives. Then, for all $t_1, \dots, t_n \ge0$,
$$\lim_{\mu \to \infty} \mathbb{P}_{\frac{2\lambda}{\mu},\mu}\circ(\phi_{t_1}^\mu,\dots,\phi_{t_n}^\mu)^{-1}=\mathbb{P}_{\lambda}\circ(\phi_{t_1},\dots,\phi_{t_n})^{-1}.$$
\end{theorem}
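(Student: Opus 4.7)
The plan is to show convergence in two stages, with the stochastic Trotter formula of Section 3 as the central technical device. Write the Diosi evolution on $[0,t]$ as an alternating product over the deterministic grid $k/\mu$, $k=0,\dots,\lfloor\mu t\rfloor$, of Hamiltonian pieces $e^{-iH/\mu}$ and free-collapse pieces $e^{\sqrt{\lambda}x\cdot(\xi_{k/\mu}-\xi_{(k-1)/\mu})-\lambda|x|^2/\mu}$; the Trotter formula will assert that this product converges to $\psi_t$. On the other side, the GRW wavefunction at time $T_{n,\mu}$ is already exactly such an alternating product, but with Hamiltonian step length $X_k/\mu$ and with collapse factor $(\alpha/\pi)^{1/4}e^{-\alpha|x-Y_k|^2/2}$. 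The calculation \eqref{871}--\eqref{874} preceding the theorem establishes that, under the respective cooked measures and with the scaling $\alpha=2\lambda/\mu$, the joint laws of the GRW collapse centers $Y_k$ and the Diosi increments $Z_k=\mu(\xi_{k/\mu}-\xi_{(k-1)/\mu})/(2\sqrt{\lambda})$ coincide, and the corresponding normalized wavefunctions agree pointwise. The proof thus reduces to (i) replacing the random exponential interarrival times $X_k/\mu$ by the deterministic mesh $1/\mu$, and (ii) invoking the Trotter formula to pass from the $1/\mu$-product to the genuine Diosi solution.

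\textbf{Deterministic interarrival times.} I would introduce the auxiliary process $\widetilde{\phi}^\mu_t$ defined by the recursion \eqref{GRW} with $X_k\equiv 1$ and the same $Y_k$-sampling rule. Since $H=-\tfrac12\triangle+V$ has bounded first and second derivatives of $V$, the map $s\mapsto e^{-isH}$ is Lipschitz in a sense sufficient to give $\|(e^{-iX_kH/\mu}-e^{-iH/\mu})\phi\|=O(|X_k-1|/\mu)$ on wavefunctions with controlled kinetic energy. By the law of large numbers $T_{n,\mu}=n/\mu+O(\sqrt{n}/\mu)$, so telescoping the errors over the $O(\mu t)$ steps produced before time $t$ makes $\phi^\mu_t-\widetilde{\phi}^\mu_{\lfloor\mu t\rfloor/\mu}$ vanish in $L^2$ in probability. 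Here one has to verify that the collapse factors do not amplify this discrepancy; this uses the contractive nature of the Gaussian multiplications together with the normalisation that accompanies the change of measure \eqref{cooking}.

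\textbf{Trotter approximation and assembly.} By the identification of collapse factors from \eqref{873}--\eqref{874}, the process $\widetilde{\phi}^\mu_{\lfloor\mu t\rfloor/\mu}$ is precisely the normalized Trotter approximant to $\phi_t$ on the mesh $1/\mu$; the stochastic Trotter formula of Section 3 supplies $L^2(\mathbb{Q})$-convergence of the unnormalized approximant to $\psi_t$, which together with the uniform integrability of $\|\psi^\mu_t\|^2$ (a consequence of the martingale property recalled after \eqref{cooking}) transfers to convergence of $\widetilde\phi^\mu_{\lfloor\mu t\rfloor/\mu}$ to $\phi_t$ under $\mathbb{P}_\lambda$. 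Combined with Step (i) this yields convergence of one-dimensional marginals, and the finite-dimensional case at $(t_1,\dots,t_n)$ follows by induction, using the Markov property of both processes together with the continuous-mapping theorem.

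\textbf{Main obstacle.} The delicate point is the Trotter step itself: the collapse generator involves the unbounded position operator $x$, so operator-norm estimates of the splitting error are unavailable and one must work on a dense subspace where the moments $\int|x|^2|\phi|^2\,dx$ stay finite along the whole evolution. Propagating such a moment bound through both the Hamiltonian and the collapse substeps under the physical measure $\mathbb{P}_\lambda$ is precisely what Section 3 has to supply, and is also the reason the Trotter statement cannot be extracted from the locally compact theory of Kurtz nor from the bounded-operator version of Gough. A secondary subtlety is the interchange between $\mathbb{Q}$-convergence of unnormalized wavefunctions and $\mathbb{P}_\lambda$-convergence of the physical process: uniform-in-$\mu$ $L^p$-estimates on $\|\psi^\mu_t\|^2$, following the Mora/Rebolledo strategy, are required to legitimise the transfer.
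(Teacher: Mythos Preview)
Your overall architecture---identify the GRW collapse centers with Wiener increments via \eqref{873}--\eqref{874}, invoke the stochastic Trotter formula, then transfer from $\mathbb{Q}$ to $\mathbb{P}_\lambda$---matches the paper's. The essential divergence is your Step~(i): you propose to first replace the random interarrival times $X_k/\mu$ by the deterministic mesh $1/\mu$, and only then apply a Trotter formula on that mesh. The paper avoids this entirely. Its Trotter formula (Theorem~\ref{thm:Produkt}, equation~\eqref{produkt}) is stated \emph{with the random exponential times $X_k/\mu$ already built into the product}, so that the right-hand side of \eqref{produkt} is literally the reformulated GRW wavefunction $\tilde\psi^\mu_t$ of \eqref{480}. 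No deterministic intermediate is needed; the random-time structure of GRW is absorbed into the Trotter statement itself.

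This matters because your sketch of Step~(i) has a genuine gap. Telescoping $e^{-iX_kH/\mu}-e^{-iH/\mu}$ over $\kappa_\mu(t)\sim\mu t$ factors yields a total error controlled by a constant times $\mu^{-1}\sum_{k\le\mu t}|X_k-1|$, and the expectation of this sum is of order $t$, not $o(1)$: the $|X_k-1|$ have a fixed positive mean and do not cancel under absolute values. One might hope to exploit that $\mu^{-1}\sum_k(X_k-1)\to 0$ in $L^2$, but turning that into control of the alternating operator product---where each error term is sandwiched between different partial products and the collapse factors are unbounded multiplication operators---requires precisely the weak-limit\,/\,equicontinuity machinery that the paper uses to prove Theorem~\ref{thm:Produkt}. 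At that point one may as well build the random times into the Trotter formula from the start, which is what the paper does (and which is why Lemma~\ref{lemm:kappa} on moments of $\kappa_\mu$ and the $X_k$ appears there).

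A minor structural difference: for the passage from $L^2(\mathbb{Q})$-convergence of $\tilde\psi^\mu_t$ to convergence of finite-dimensional laws of $\tilde\phi^\mu_t$ under $\mathbb{P}_\lambda$, the paper does not use a Markov-property induction. It tests directly against bounded continuous $f:L^2(\mathbb{R})^n\to\mathbb{R}$, writes $\mathbb{E}_{\mathbb{P}_\lambda^\mu}[f]=\mathbb{E}_{\mathbb{Q}}[f\,\|\tilde\psi^\mu_{t_n}\|^2]$ via \eqref{cooking2}, and combines Vitali's theorem with the a.s.\ positivity of $\|\psi_t\|$ (from \cite{Mora2}) and the elementary estimate $\|x/\|x\|-y/\|y\|\|\le\|x-y\|/\min(\|x\|,\|y\|)$ to pass to the normalized processes.
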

\begin{remark}
 The restrictions on V make sure that the Hamiltonian satisfies conditions (I) and (II) (cf. Condition \ref{Bed}) from the next chapter. It exemplifies that our abstract result can be applied to cases of physical interest. However our focus  in this paper is not on greatest possible generality.
\end{remark}

\begin{proof}

For simplicity, we formulate the proof in one-dimensional space. The key ingredient is the Trotter formula \eqref{produkt}, which is shown in the next section.

Similar to the observation above, we shall express the full GRW process in terms of the increments of the Wiener process by defining
\begin{equation}\label{480}
 \tilde\psi_t^\mu:= e^{-i(t-T_{\kappa_\mu(t)})H}\prod_{k=0}^{\kappa_\mu(t)-1} e^{\sqrt\lambda (\cdot)\cdot(\xi_\frac{k+1}{\mu}-\xi_\frac{k}{\mu})-\frac{\lambda}{\mu}|\cdot|^2}
 e^{-\frac{i}{\mu}X_{k+1}H}\phi_0
\end{equation}
with
\begin{equation}\label{kappa1}
 \kappa_\mu(t):=max\{ k:\sum_{j=1}^k \frac{X_j}\mu\le t\} ,
\end{equation}
setting
\begin{equation*}
 \tilde\phi_t^\mu:=\frac{\tilde\psi_t^\mu}{\|\tilde\psi_t^\mu\|_{L^2}}
\end{equation*}
and performing the reweighting
\begin{equation}\label{cooking2}
 \mathbb{P}_\lambda^\mu(A\mid X):=\mathbb{E}_\mathbb{Q}(\chi_A\|\tilde\psi_{T_{n,\mu}}^\mu\|^2\mid X) \textnormal{ for } A\in\mathfrak{F}_{\frac n\mu} .
\end{equation}
We shall now show that this indeed yields the GRW process, i.e.
\begin{equation}\label{315}
 \tilde\phi_t^\mu \stackrel{\mathcal L}= \phi_t^\mu
\end{equation}
where $\tilde\phi_t^\mu$ is distributed w.r.t. $\mathbb P_\lambda^\mu$ and $\phi_t^\mu$ w.r.t. $\mathbb P_{\alpha,\mu}$.
In the next section we prove the product formula \eqref{produkt}. Observing that \eqref{480} is the right-hand side of \eqref{produkt} (with $H=-\frac12\triangle + V$ and $A\psi(x)=\sqrt\lambda x\psi(x)$), the proof of the theorem reduces to showing that
\begin{enumerate}
 \item[\bf Step 1:] \eqref{315} holds true
 \item[\bf Step 2:]  $-\frac 12\triangle + V$ and the position operator  satisfy Condition \ref{Bed} of Theorem \ref{thm:Produkt}.
 \item[\bf Step 3:] the resulting $L^2$-convergence $\tilde\psi_t^\mu \to \psi_t$ implies $\tilde\phi_t^\mu\to\phi_t$ in the sense of finite-dimensional distributions.
\end{enumerate}

\textbf{Step 1:} Substituting \eqref{incr} into \eqref{480} as we did in \eqref{873}, we get
\begin{equation*}
 \tilde\phi_t^\mu=\frac{\tilde\psi_t^\mu}{\|\tilde\psi_t^\mu\|_{L^2}}=
C e^{-i(t-T_{\kappa_\mu(t)})H}\prod_{k=0}^{\kappa_\mu(t)-1} e^{-\frac{\alpha}{2}|\cdot-Z_{k+1}|^2}e^{-\frac{i}{\mu}X_{k+1}H}\phi
\end{equation*}
with $\xi$-dependent normalization factor C. Observing that, by definition \eqref{incr}, \\
$\{Z_1\in A_1,\dots, Z_n\in A_n\} \in \mathfrak F_\frac{n}{\mu}$, we arrive at the following generalization of \eqref{874} :
\begin{align*}
  &\mathbb{P}_\lambda^\mu(Z_1\in A_1,\dots, Z_n\in A_n\mid X)
  \stackrel{\eqref{cooking2}}= \mathbb{E}_\mathbb{Q}(\chi_{\{Z_1\in A_1, \dots, Z_n\in A_n\}}\|\tilde\psi_{T_{n,\mu}}^\mu\|^2\mid X) \\
  \stackrel{\eqref{480}, \eqref{Sprungzeiten}}=& \mathbb{E}_\mathbb{Q}(\chi_{\{Z_1\in A_1, \dots, Z_n\in A_n\}} \int |\prod_{k=0}^{n-1}
  e^{\frac{2\lambda}{\mu}(\cdot)\cdot Z_{k+1} - \frac{\lambda}{\mu}|\cdot|^2}e^{-\frac{i}{\mu}X_{k+1}H}\phi|^2(x) \textnormal dx \mid X ) \\
  \stackrel{\eqref{Skalierung}}=& \int_{A_1\times\dots\times A_n} \int |\prod_{k=0}^{n-1}
  e^{\alpha (\cdot)\cdot z_{k+1} - \frac{\alpha}{2}|\cdot|^2}e^{-\frac{i}{\mu}X_{k+1}H}\phi|^2(x) \textnormal dx
  \sqrt{\frac{\alpha}{\pi}}^n e^{-\alpha\sum_{k=0}^{n-1} |z_{k+1}|^2} \textnormal d(z_1,\dots,z_n) \\
  =& \sqrt{\frac{\alpha}{\pi}}^n\int_{A_1\times\dots\times A_n} \int |\prod_{k=0}^{n-1}
  e^{\alpha (\cdot)\cdot z_{k+1} - \frac{\alpha}{2}|\cdot|^2 - \frac{\alpha}{2}|z_{k+1}|^2}e^{-\frac{i}{\mu}X_{k+1}H}\phi|^2(x) \textnormal dx
  \textnormal d(z_1,\dots,z_n) \\
  =& \sqrt{\frac{\alpha}{\pi}}^n\int_{A_1\times\dots\times A_n} \int |\prod_{k=0}^{n-1}
  e^{-\frac{\alpha}{2}|\cdot-z_{k+1}|^2}e^{-\frac{i}{\mu}X_{k+1}H}\phi|^2(x) \textnormal dx \textnormal d(z_1,\dots,z_n) .
\end{align*}
Comparing these equations with \eqref{GRW}, \eqref{interpol} and \eqref{flashes} - the ones defining GRW -, one sees that $\mathbb P_\lambda^\mu = \mathbb P_{\alpha, \mu}$ so that, in particular, \eqref{315} holds true.

\textbf{Step 2:}  
Let $\mathcal M$ be the domain of the harmonic oscillator. According to \cite{Huang}, this is invariant under $H_t$ and, since $A_{s,t}$ is just multiplication by a Gaussian (\eqref{Kollapsloesung}, \eqref{e:partialflowA}), also under $A_{s,t}$.
Next, we choose $\mathcal N:=C_0^\infty$ and verify the second part of  Condition \ref{Bed} (I)) by an explicit calculation, starting with the case $V\equiv0$:
\begin{align*}
 &\mathbb E \int |\Delta\left(e^{\sqrt\lambda x\xi_t-\lambda x^2t}-1\right)\phi(x)|^2 \textnormal dx \\
 \le& 3\mathbb E\int |\left((\xi_t-2tx)^2-2t\right)e^{\sqrt\lambda x\xi_t-\lambda x^2t}|^2|\pi(x)|^2 \textnormal dx
 + 12\mathbb E\int|(\xi_t-2tx)e^{\sqrt\lambda x\xi_t-\lambda x^2t}|^2|\phi'(x)|^2 \\
 &+ 3\mathbb E\int |e^{\sqrt\lambda x\xi_t-\lambda x^2t}-1|^2|\psi''(x)|^2\textnormal dx \\
 =& 3\int\frac1{\sqrt{2\pi t}}\int \left((y-2tx)^2-2t\right)^2e^{-\frac{(y-2tx)^2}{2t}}\textnormal dy|\phi(x)|^2 \textnormal dx \\
 &+ 12\int\frac1{\sqrt{2\pi t}}\int (y-2tx)^2 e^{-\frac{(y-2tx)^2}{2t}}\textnormal dy|\phi'(x)|^2 \textnormal dx \\
 &+ 3\int\frac1{\sqrt{2\pi t}}\int (e^{\sqrt\lambda xy-tx^2}-1)^2 e^{-\frac{y^2}{2t}}\textnormal dy|\phi''(x)|^2 \textnormal dx \\
 =& 15t^2\int|\phi(x)|^2 \textnormal dx
 + 12t\int |\phi'(x)|^2 \textnormal dx
 + 6\int (1-e^{-\frac t2x^2})|\phi''(x)|^2 \textnormal dx
\end{align*}
The first two summands trivially converge to 0 for $t\to0$.  The last one converges to zero by dominated convergence because the integrand is dominated by $\frac t2x^2|\phi''(x)|^2$. 

If $V\neq0$ and bounded by a constant $C$, then by virtue of \eqref{Anfangswert} 
\begin{equation*}
 \mathbb E\|(\Delta+V)(A_{0,t}-1)\phi\|^2 
\le 2 \mathbb E\|\Delta(A_{0,t}-1)\phi\|^2 + 2C \mathbb E\|(A_{0,t}-1)\phi\|^2 \to0 \,.
\end{equation*}

Condition \ref{Bed} (II) was verified by \cite{Mora} for the case $V\equiv0$. $R$ was again chosen to be the harmonic oscillator.  $V\neq0$ bounded does not change the domains, so it remains to establish
\begin{equation*}
 \langle R^2\psi, V\psi\rangle \le \alpha(\|\psi\|^2 + \|R\psi\|^2 + \beta)
\end{equation*}
for some $\alpha, \beta>0$. Indeed,
\begin{equation*}
 \langle R^2\psi, V\psi\rangle = \langle R\psi, RV\psi\rangle \le \|R\psi\|\cdot\|RV\psi\| \le \frac12 \|R\psi\|^2 + \frac12 \|RV\psi\|^2 \le \alpha(\|\psi\|^2 + \|R\psi\|^2 + \beta) ,
\end{equation*}
the last inequality being ensured by the boundedness of $V$ and its derivatives.

\textbf{Step 3:} We have to show that
\begin{equation*}
 \lim_{\mu\to\infty}\mathbb{E}_{\mathbb P_\lambda^\mu}(f(\tilde\phi^\mu_{t_1},\dots,\tilde\phi^\mu_{t_n}))
 = \mathbb{E}_{\mathbb P_\lambda}(f(\phi_{t_1},\dots,\phi_{t_n}))
\end{equation*}
for all bounded continuous $f: L^2(\mathbb R)^n\to \mathbb R$.
Observing that, by definition \eqref{kappa1}, there are $\kappa_\mu(t_n)$ jumps up to time $t_n$, which implies that, for fixed X,
 $f(\tilde\phi^\mu_{t_1},\dots,\tilde\phi^\mu_{t_n})$ is $\mathfrak F_\frac{\kappa_\mu(t_n)}{n}$-measurable, and that, by \eqref{Sprungzeiten},
$$T_{\kappa_\mu(t_n),\mu}\le t_n ,$$
\eqref{cooking2} yields
\begin{equation*}
 \mathbb{E}_{\mathbb P_\lambda^\mu}(f(\tilde\phi^\mu_{t_1},\dots,\tilde\phi^\mu_{t_n}))
=\mathbb{E}_\mathbb{Q}(f(\tilde\phi^\mu_{t_1},\dots,\tilde\phi^\mu_{t_n}) \|\tilde\psi^\mu_{t_n}\|^2 ) .
\end{equation*}
Taking into account \eqref{cooking}, we are thus left to show
\begin{equation}\label{a10}
 \lim_{\mu\to\infty} \mathbb{E}_\mathbb{Q}(f(\tilde\phi^\mu_{t_1},\dots,\tilde\phi^\mu_{t_n}) \|\tilde\psi^\mu_{t_n}\|^2 )
= \mathbb{E}_\mathbb{Q}(f(\phi_{t_1},\dots,\phi_{t_n}) \|\psi_{t_n}\|^2 ) .
\end{equation}
Now, by triangle inequality,
\begin{equation*}\label{a11}
\begin{split}
 &|f(\tilde\phi^\mu_{t_1},\dots,\tilde\phi^\mu_{t_n}) \|\tilde\psi^\mu_{t_n}\|^2
- f(\phi_{t_1},\dots,\phi_{t_n}) \|\psi_{t_n}\|^2| \\
\le& |f(\tilde\phi^\mu_{t_1},\dots,\tilde\phi^\mu_{t_n})| \cdot |\|\tilde\psi^\mu_{t_n}\|^2 - \|\psi_{t_n}\|^2|
+ |f(\tilde\phi^\mu_{t_1},\dots,\tilde\phi^\mu_{t_n}) - f(\phi_{t_1},\dots,\phi_{t_n}) | \|\psi_{t_n}\|^2
\end{split}
\end{equation*}
The first summand goes to 0 in $L^1(\Omega)$ because f is bounded and $\tilde\psi^\mu_{t_n} \to \psi_{t_n}$ in $L^2$ according to { Step 2}. The second summand is dominated by $4C^2\|\psi_{t_n}\|^2$ where C is a bound for f, so, according to Vitali's theorem, \eqref{a10} is proven once we know that
\begin{equation}\label{a12}
 \lim_{\mu\to\infty} |f(\tilde\phi^\mu_{t_1},\dots,\tilde\phi^\mu_{t_n}) - f(\phi_{t_1},\dots,\phi_{t_n}) | \|\psi_{t_n}\|^2 = 0
\end{equation}
in probability w.r.t. $\mathbb Q$. For this, let $\epsilon > 0$. From \cite{Mora2}, $\|\psi_t\|>0$ almost surely for every t, so, by $\sigma$-continuity, one can find an $a$ such that
\begin{equation}\label{ex1}
 \mathbb Q(\|\psi_{t_1}\| , \dots, \|\psi_{t_n}\| >a) > 1-\frac\epsilon3 .
\end{equation}
Since $\|\tilde\psi^\mu_{t_n}\| \to \|\psi_{t_n}\|$ in $L^2$, this result also allows us to find $M\in\mathbb N$ and $b>0$ such that, for all $\mu>M$,
\begin{equation}\label{ex2}
 \mathbb Q(\|\tilde\psi^\mu_{t_1}\|, \dots, \|\tilde\psi^\mu_{t_n}\|>b) > 1-\frac\epsilon3 .
\end{equation}
Next, we choose $\delta>0$ such that, if
$$\|\chi_1 - \phi_{t_1}\| < \delta, \dots, \|\chi_n - \phi_{t_n}\| < \delta ,$$
then
$$|f(\chi_1, \dots, \chi_n) - f(\phi_{t_1}, \dots, \phi_{t_n})| < \epsilon .$$
Finally, again applying the $L^2$-convergence $\tilde\psi^\mu_t \to \psi_t$, we fix $N\in\mathbb N$ such that for all $\mu>N$
\begin{equation}\label{ex3}
 \mathbb Q(\|\tilde\psi^\mu_{t_1}-\psi_{t_1}\|, \dots, \|\tilde\psi^\mu_{t_1}-\psi_{t_1}\|<\delta\min(a,b)) > 1-\frac\epsilon3 .
\end{equation}
Applying the estimate
\begin{equation*}
 \left\| \frac{x}{\|x\|} - \frac{y}{\|y\|} \right\| \le \frac{\|x-y\|}{\min(\|x\|,\|y\|)},
\end{equation*}
which holds true on any Hilbert space, to $x=\tilde\psi^\mu_{t_k}$ and $y=\psi_{t_k}$, we find that
$$\|\tilde\phi^\mu_{t_1}-\phi_{t_1}\|, \dots, \|\tilde\phi^\mu_{t_n}-\phi_{t_n}\| <\delta$$
outside the exceptional sets from \eqref{ex1}, \eqref{ex2} and \eqref{ex3}, so
\begin{equation*}
 \mathbb Q( |f(\tilde\phi^\mu_{t_1},\dots,\tilde\phi^\mu_{t_n}) - f(\phi_{t_1},\dots,\phi_{t_n})|>\epsilon ) <\epsilon .
\end{equation*}

\end{proof}
\section{Stochastic Trotter formula}
Let $H$ and $A$ be possibly unbounded selfadjoint operators in the separable Hilbert space $\mathfrak{H}$ with domains $\mathcal{D}(H)$ and $\mathcal{D}(A)$, respectively.  
In the Diosi process $H=-\frac12\Delta $ and $A$ is multiplication by $\sqrt\lambda x$.
Consider the equation 
\begin{equation}\label{e:SSE}
\begin{split}
   \textnormal{d}\psi_t =& (-iH-\frac{1}{2}A^2)\psi_t \textnormal{d}t + A\psi_t\textnormal{d}\xi_t \\
   \psi_0=&\psi
\end{split}
\end{equation}
and call its solution flow $C_t$.
The operators $H$ and $A$ generate the partial solution flows
\begin{equation}\label{e:partialflowH}
H_t:=e^{itH}
\end{equation}
and
\begin{equation}\label{e:partialflowA}
A_{s,t}:=e^{(\xi_t-\xi_s)A-(t-s)A^2}.
\end{equation}

It is natural to consider the whole ``anti-hermitian'' part of \eqref{e:SSE}, i.e.
\begin{equation}\label{921}
\textnormal{d}\psi_t = A\psi_t\textnormal{d}\xi_t - \frac{1}{2}A^2\psi_t\textnormal{d}t ,
\end{equation}
as the `stochastic part'. Due to the damping term $-\frac12 A^2$, its solution \eqref{e:partialflowA}
is in $L^2(\Omega,\mathfrak H)$, depends continuously in norm on the initial data, i.e.
\begin{equation}\label{Anfangswert}
 \lim_{t \to 0}\mathbb E\|(A_{0,t}-1)\psi\|^2 = 0
\end{equation}
and $\|\psi_t\|^2$ is a martingale, in particular
\begin{equation}\label{iso1}
 \mathbb{E}\|\psi_t\|^2=\|\psi_0\|^2 .
\end{equation}
These useful properties can easily be verified via the functional calculus; the martingale property and \eqref{iso1} also hold true for strong solutions of \eqref{e:SSE} (see \cite{Mora}). If one increased the factor $\frac{1}{2}$ in the damping term, \eqref{921} would still have a strong solution with $\|\psi_t\|^2$ being a submartingale, Holevo's weak solution (\cite{Holevo}) to the correspondingly modified equation \eqref{e:SSE} would have the same property, but Mora's result about strong solvability (\cite{Mora}) would not apply. Without any damping term, the solution would in general not be in $L^2(\Omega, \mathfrak H)$. Of course, if one lets A be, for example, the position operator, it could still be defined pointwise.

We assume throughout this section
\begin{Corollary}\label{Bed}
\begin{enumerate}
\item[]
\item[(I)] There exists a dense subset $\mathcal{M} \subset Dom(H)\cap Dom(A^2)$ of $\mathfrak H$ which is invariant under $H_t$ and $A_{s,t}$, and another dense set $\mathcal{N}\subset Dom(H^2)\cap Dom(A^4)$ such that for all $\phi \in \mathcal N$
$$\lim_{t\to0} \mathbb E\|H(A_{0,t}-1)\phi\|^2 =0 .$$
\item[(II)] There exists a self-adjoint positive ``reference operator'' $R$ such that, with $$G:=-iH-\frac12 A^2 ,$$
	\begin{itemize}
		\item $Dom(R)\subset Dom(G)\cap Dom(G^*)$, $Dom(R^2)\subset Dom(RA)$
		\item $\left\{ \psi\in\mathfrak H \mid \|\psi\|^2 + \|R\psi\|^2 \le1 \right\}$ is compact
		\item $\langle R^2\psi, G\psi\rangle + \|A\psi\|^2 \le \alpha(\|\psi\|^2 + \|R\psi\|^2 + \beta)$ for some $\alpha,\beta \ge 0$ .
	\end{itemize}
\end{enumerate}
\end{Corollary}
(II) is taken from \cite{Mora} and guarantees that \eqref{e:SSE} has a so-called ``$R$-strong solution'' for any regular initial value.

The following theorem is a version of the Trotter formula for the stochastic Schr\"odinger equation.
\begin{theorem}\label{thm:Produkt}
Let $X_1, X_2, \dots$ be independent exponential random variables, which are independent of the Wiener process $\xi$. Define
\begin{equation}\label{kappa}
 \kappa_\mu(t) := \max\{k\in\mathbb{N}:\sum_{j=1}^k \frac{X_i}{\mu}\le t\} ,
\end{equation}
i.e. the corresponding Poisson process with intensity $\mu$, and
\begin{equation*}\label{Sprungzeiten2}
T_{n,\mu} :=\sum_{k=1}^n\frac{X_k}{\mu},
\end{equation*}
the time of the n'th jump.

Moreover, assume that Condition \ref{Bed}(I) is satisfied. Then the following formula holds true in the weak $L^2$-sense:
\begin{equation}\label{produkt}
  C_{0,t}\psi=\lim_{\mu\to\infty} H_{t-T_{\kappa_\mu(t),\mu}}\prod_{k=0}^{\kappa_\mu(t)-1} A_{\frac{k}{\mu},\frac{k+1}{\mu}} H_{\frac{X_{k+1}}\mu} \psi 
 \end{equation}
If, in addition, (II) is satisfied and $\mathcal M\cap Dom(R)$ is dense in $\mathfrak{H}$,  the convergence is  strong.
\end{theorem}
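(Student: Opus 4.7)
The plan is to follow the Trotter--Chernoff template, adapted to the stochastic setting. Write $\Pi_\mu(t)$ for the random operator on the right-hand side of \eqref{produkt}. The analysis splits into a uniform $L^2$-bound on the approximants, an infinitesimal comparison with the SDE \eqref{e:SSE} carried out on the smooth test space $\mathcal N$, and a telescoping summation whose error is driven to zero by Condition~\ref{Bed}(I). Because each $H_s$ is unitary and $\mathbb E\|A_{s,t}\psi\|^2=\|\psi\|^2$ for deterministic $\psi$ by \eqref{iso1}, conditioning successively on $\mathfrak F_{k/\mu}\vee\sigma(X_1,\dots,X_{k+1})$ yields $\mathbb E\|\Pi_\mu(t)\psi\|^2\le\|\psi\|^2$ uniformly in $\mu$. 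This provides the tightness needed in $L^2(\Omega,\mathfrak H)$ and reduces the weak-convergence claim to identification of the limit on a dense subset.

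Using the flow property of $C$ across the random grid $\{T_{k,\mu}\}$, I would write the error as a telescoping sum of single-step contributions
\begin{equation*}
 \Pi_\mu^{>k}(t)\bigl(A_{k/\mu,(k+1)/\mu}\,H_{X_{k+1}/\mu}\;-\;C_{T_{k,\mu},T_{k+1,\mu}}\bigr)C_{0,T_{k,\mu}}\phi ,
\end{equation*}
plus a boundary contribution on $[T_{\kappa_\mu(t),\mu},t]$, where the prefactor $\Pi_\mu^{>k}(t)$ is an $L^2$-contraction in conditional mean. Applying It\^o's formula to both factors over the interval of length $\sim 1/\mu$ and expanding, the deterministic $O(1/\mu)$ terms match and the leading stochastic pieces $A\phi\cdot(\xi_{(k+1)/\mu}-\xi_{k/\mu})$ cancel between product and true flow; the residual commutator-type remainder is precisely what is controlled by the hypothesis $\lim_{t\to0}\mathbb E\|H(A_{0,t}-1)\phi\|^2=0$ of Condition~\ref{Bed}(I), with subleading pieces involving $A^2\phi$ handled via $\phi\in\mathrm{Dom}(A^4)$.

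The non-cancelling residual errors are martingale increments in the filtration $\mathfrak F_{(k+1)/\mu}\vee\sigma(X_1,\dots,X_{k+1})$, so their variances add rather than accumulate by triangle inequality; summing over the $\kappa_\mu(t)\sim\mu t$ steps therefore produces a total $L^2(\Omega)$-error that vanishes as $\mu\to\infty$. The mismatch between the deterministic grid $\{k/\mu\}$ used inside each $A_{k/\mu,(k+1)/\mu}$ and the random grid $\{T_{k,\mu}\}$ at which the product alternates is absorbed via the strong law for the Poisson process ($T_{\kappa_\mu(t),\mu}\to t$ and $\kappa_\mu(t)/\mu\to t$ almost surely), combined with strong continuity of $(H_s)$ and the initial-data continuity \eqref{Anfangswert} of $A_{s,t}$. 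Density of $\mathcal N$ and the uniform bound from the first paragraph then extend the weak convergence to all $\psi\in\mathfrak H$.

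For the strong convergence statement under the additional assumption~(II), I would propagate Mora's $R$-energy estimate through the product: the third bullet of Condition~\ref{Bed}(II) survives iteration under the alternation of $H_s$ and $A_{s,t}$ and yields $\sup_\mu\mathbb E\|R\Pi_\mu(t)\psi\|^2<\infty$ for $\psi\in\mathcal M\cap\mathrm{Dom}(R)$. Compactness of the $R$-unit ball in $\mathfrak H$ upgrades the already-established weak convergence to strong convergence pathwise, and the uniform $L^2(\Omega)$-bound delivers strong $L^2(\Omega,\mathfrak H)$-convergence by Vitali. The main obstacle is the one-step It\^o expansion underlying the telescoping sum: producing a norm bound sharp enough to survive $\sim\mu t$ summation, and in particular identifying the dominant stochastic pieces as martingale increments so that orthogonality -- not the triangle inequality -- governs their aggregation and the apparent $\sqrt\mu$-blow-up is avoided; this is precisely where Condition~\ref{Bed}(I) enters in an essential way.
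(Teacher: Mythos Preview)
Your telescoping compares the product to the true flow $C$ at each step, and this creates two genuine gaps. First, the single-step comparison acts on $C_{0,T_{k,\mu}}\phi$, but under Condition~(I) alone you have no regularity for this vector: the hypothesis only guarantees invariance of $\mathcal M$ under the \emph{partial} flows $H_t$ and $A_{s,t}$, not under the full flow $C$, and your one-step It\^o expansion requires the input to lie in something like $\mathcal N\subset\mathrm{Dom}(H^2)\cap\mathrm{Dom}(A^4)$. Under~(I) alone one does not even know that $C$ exists as a strong solution, let alone that it preserves a regular subspace. Second, the cancellation you claim for the leading stochastic pieces does not occur: the product step contributes $A\phi\cdot(\xi_{(k+1)/\mu}-\xi_{k/\mu})$ while $C_{T_{k,\mu},T_{k+1,\mu}}$ contributes $A\phi\cdot(\xi_{T_{k+1,\mu}}-\xi_{T_{k,\mu}})$, and these are Brownian increments over \emph{different} (deterministic versus random) intervals. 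Their difference has variance of order $1/\mu$ per step and is not a martingale increment in any common filtration, so the $\sim\mu t$ such terms do not sum to something small; invoking the strong law and strong continuity of $H_s$, $A_{s,t}$ does not repair an $L^2(\Omega,\mathfrak H)$ estimate.

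The paper sidesteps both problems by never comparing to $C$. It telescopes the product against the identity, expands each factor by It\^o's formula to get an integral representation $g_\mu(t)\psi=\psi+I_1^\mu+I_2^\mu+I_3^\mu$, and in the weak pairing $\mathbb E\,\overline{Y\zeta}\langle\phi,\cdot\rangle$ shifts the unbounded operators $H,A,A^2$ onto the \emph{fixed} test vector $\phi\in\mathcal N$; thus only invariance of $\mathcal M$ under $H_t,A_{s,t}$ is ever used on the evolved vector $g_\mu(T_{k,\mu})\psi$. Equicontinuity in $t$ of these scalar functions, Arzel\`a--Ascoli, and a diagonal argument produce a weak limit $\psi_t$, which is then shown to satisfy the weak form of \eqref{e:SSE} and identified via Holevo's uniqueness theorem; the grid mismatch is dealt with term by term at this identification stage (the error terms $J_1^\mu,\dots,J_6^\mu$). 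For strong convergence the paper does not propagate an $R$-energy through the product as you propose: it simply observes that $\mathbb E\|g_\mu(t)\psi\|^2=\|\psi\|^2$ always, and that under~(II) Mora's strong solution also satisfies $\mathbb E\|\psi_t\|^2=\|\psi\|^2$, so weak convergence together with convergence of norms yields strong convergence on the dense set $\mathcal M\cap\mathrm{Dom}(R)$, and hence on all of $\mathfrak H$ by the uniform norm bound.
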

\begin{remark}
 Formula \eqref{produkt} should be read in the following way: A time-continuous evolution, given by $H_t$ with time parameter t, is interrupted at random times by jumps, described by $A_{\frac k\mu, \frac{k+1}{\mu}}$. The jumps are Markovian, there sizes are independent and do not depend on the waiting times. In the proof, the indices of $A_{\frac k\mu, \frac{k+1}{\mu}}$ will be treated on the same level as the time parameter; however, this is only a technical trick.
\end{remark}

\begin{proof}
 W.l.o.g. we first construct the Wiener process and the i.i.d. sequence on separate probability spaces $(\Omega_\xi, \mathfrak F_\xi,\mathbb P_\xi)$ and $(\Omega_X, \mathfrak F_X,\mathbb P_X)$, then form the product space
\begin{equation*}
(\Omega, \mathfrak F,\mathbb P) := (\Omega_\xi\times\Omega_X, \mathfrak F_\xi\otimes\mathfrak F_X,\mathbb P_\xi\otimes \mathbb P_X)
\end{equation*}
and get $\xi$ and $X$ as the appropriate projections.

We define the shorthand notation.
\begin{equation}\label{b1}
 g_\mu(t):= \prod_{k=0}^{\kappa_\mu(t)-1} H_{\frac{X_{k+1}}\mu} A_{\frac{k}{\mu},\frac{k+1}{\mu}} .
\end{equation}
Our key idea is to decompose the total error $g_\mu(t)-C_{0,t}$ into the sum over the local errors over time intervals $\left[\frac k\mu,\frac{k+1}\mu\right]$. Therefore we rewrite $g_\mu(t)$ as a telescopic sum:
\begin{equation}\label{b2}
 g_\mu(t) = 1+\sum_{k=0}^{\kappa_\mu(t)-1}
 \left(A_{\frac{k}{\mu},\frac{k+1}{\mu}}H_{\frac{X_{k+1}}\mu}-1\right) g_\mu\left( T_{k,\mu}\right) .
\end{equation}

 If we choose an initial value $\psi\in\mathcal M$, $u\ge0$ and $\mu\in\mathbb N$, then Condition \ref{Bed}(I) ensures that for fixed $s\ge 0$ and variable $t\ge s$ one can apply the It\^o formula to $A_{s,t}H_{c(t-s)}g_\mu(u)\psi$. We get
 \begin{equation*}\label{bito}
 \begin{split}
  A_{s,t}H_{c(t-s)}g_\mu(u)\psi = g_\mu(u)\psi-&i\int_s^t cA_{s,X}H_{c(X-s)}Hg_\mu(u)\psi\mathrm d\tau + \int_s^t A_{s,\tau}A H_{c(\tau-s)}g_\mu(u)\psi\mathrm d\xi_\tau \\
  -& \frac{1}{2}\int_s^t A_{s,\tau}A^2 H_{c(\tau-s)} g_\mu(u)\psi\mathrm d\tau .
 \end{split}
 \end{equation*}
The formula remains valid for $c=\frac{X_k}\mu$, $s=k$ and $t=k+1$. Substituting it into \eqref{b2}, we get
 \begin{equation}\label{b3}
 \begin{split}
  g_\mu(t)\psi
 = \psi-&i\sum_{k=0}^{\kappa_\mu(t)-1} \int_\frac{k}{\mu}^{\frac{k+1}{\mu}} X_{k+1} A_{\frac{k}{\mu},s} H_{X_{k+1}(s-\frac{k}{\mu})}H g_\mu\left( T_{k,\mu}\right)\psi \mathrm ds \\
 +& \sum_{k=0}^{\kappa_\mu(t)-1} \int_{\frac{k}{\mu}}^{\frac{k+1}{\mu}} A_{\frac{k}{\mu},s}A H_{X_{k+1}(s-\frac{k}{\mu})} g_\mu\left( T_{k,\mu}\right)\psi \mathrm d\xi_s \\
 -&\frac{1}{2} \sum_{k=0}^{\kappa_\mu(t)-1} \int_{\frac{k}{\mu}}^{\frac{k+1}{\mu}} A_{\frac{k}{\mu},s}A^2 H_{X_{k+1}(s-\frac{k}{\mu})} g_\mu\left( T_{k,\mu}\right)\psi \mathrm ds \\
 =: \psi +& I_1^\mu(t)+I_2^\mu(t)+I_3^\mu(t) .
 \end{split}
 \end{equation}
We establish the weak limit of the above expressions. Note that  the operators in front of the $g_\mu$'s will converge to the operators $H,A\textnormal{ and }A^2$ appearing in \eqref{e:SSE} for $n\to\infty$.

Our first step is to show equicontinuity in a weak sense in $L^2(\Omega,\mathfrak{H})\cong L^2(\Omega_\xi,\mathbb{C})\otimes L^2(\Omega_X,\mathbb{C})\otimes\mathfrak{H}$, i.e. we choose $\zeta\in L^2(\Omega_\xi,\mathbb{C})$, $Y\in L^2(\Omega_X,\mathbb{C})$ and $\phi\in\mathcal{N}$, consider the complex-valued functions
\begin{equation}\label{b4}
f_\mu(t):=\mathbb{E}\overline{ Y\zeta}\langle\phi,g_\mu(t)\psi\rangle
\end{equation}
and show that they are equicontinuous as a sequence in $\mu$. Then we will conclude convergence for $\mu\to\infty$ and argue that the limit actually solves \eqref{e:SSE}.

Since $H_t$ and $A_{s,t}$ preserve the $L^2(\Omega_\xi,\mathfrak H)$-norm (see \eqref{iso1}),
\begin{equation}\label{b5}
\mathbb{E}\|g_\mu(t)\psi\|^2=\mathbb E_X\mathbb E_\xi\|g_\mu(t)\psi\|^2=\mathbb E_X\|\psi\|^2=\|\psi\|^2
\end{equation}
holds, so the terms $g_\mu(\cdot)\psi$, viewed as mappings from some bounded time interval $[0,T]$ to $L^2(\Omega,\mathfrak H)$, are uniformly bounded for any $\psi\in\mathfrak H$.
Therefore, in order to get weak convergence of $g_\mu(\cdot)\psi$ for $\mu\to\infty$, we need only consider $\zeta$, $Y$ and $\phi$ from dese subsets of the corresponding Hilbert spaces; in particular, we restrict to
\begin{equation*}
\phi\in\mathcal{N} .
\end{equation*}
Since $L^2(\Omega,\mathfrak H)$ is separable ($L^2(\Omega_\xi)$ according to the Wiener chaos decomposition (\cite{Hackenbroch}), $L^2(\Omega_X,\mathbb C)$ by construction, $\mathfrak H$ by assumption and the whole space $L^2{\Omega, \mathfrak H}$ due to its tensor structure), even an appropriate sequence
\begin{equation}\label{b5.5}
(Y_l\zeta_l\phi_l)_{l\in\mathbb N}
\end{equation}
suffices; for simplicity of notation, we can take normalized vectors.

In the following, we shall repeatedly apply the inequality
\begin{equation}\label{Guenter}
 \|x_1 + \dots + x_n\|^2 \le n(\|x_1\|^2 + \dots + \|x_n\|^2) , 
\end{equation}
which holds true for any $x_1, \dots, x_n$ from a vector space with norm $\|\cdot\|$ and $n\in\mathbb N$, to the norms on $\mathbb H$ , $L^2(\Omega, \mathbb C)$ and $L^2(\Omega, \mathfrak H)$. Moreover, we shall often employ  the integral form of Jensen's inequality
\begin{equation}\label{Jensen}
 \left\| \int_a^b \dots \textnormal dx \right\|^2 \le |b-a|\int_a^b \| \dots \|^2 \textnormal dx\,.
\end{equation}

According to the Cauchy-Schwarz inequality and \eqref{b5}, every sequence $(f_\mu)$ is bounded uniformly in t. As for the equicontinuity, we recall definition \eqref{b4} and formula \eqref{b3} and, using the Cauchy-Schwarz inequality, we compute
\begin{equation}\label{b6}
 |f_\mu(t)-f_\mu(s)|^2 \le \mathbb{E}|Y\zeta|^2\cdot\mathbb{E}|\langle\phi,g_\mu(t)\psi-g_\mu(s)\psi\rangle|^2
 \stackrel{\eqref{Guenter}}\le 3\sum_{j=1}^3\mathbb{E}|\langle\phi,I_j^\mu(t)-I_j^\mu(s)\rangle|^2 .
\end{equation}

Now we will show equicontinuity for the three summands in \eqref{b6} separately. Recalling their definition \eqref{b3}, we start with
\begin{equation}\label{b7}
 \begin{split}
 &\mathbb{E}|\langle\phi,I_1^\mu(t)-I_1^\mu(s)\rangle|^2 \\
 =&\mathbb{E}|\langle\phi, \sum_{k=\kappa_\mu(s)}^{\kappa_\mu(t)-1} \int_{\frac{k}\mu}^{\frac{k+1}\mu} X_{k+1}  A_{\frac{k}\mu,s} H_{X_{k+1}(s-\frac{k}\mu)}H g_\mu\left( T_{k,\mu}\right)\psi \mathrm ds\rangle|^2 \\
 =&\mathbb{E}_X\mathbb E_\xi|\sum_{k=\kappa_\mu(s)}^{\kappa_\mu(t)-1} X_{k+1} \int_{\frac{k}\mu}^{\frac{k+1}\mu} \langle  H_{X_{k+1}(s-\frac{k}\mu)}H A_{\frac{k}\mu,s} \phi,  g_\mu\left( T_{k,\mu}\right)\psi \rangle \mathrm ds|^2 \\
 \stackrel{\eqref{Guenter}}\le& \mathbb{E}_X|\kappa_\mu(t)-\kappa_\mu(s)| \sum_{k=\kappa_\mu(s)}^{\kappa_\mu(t)-1} X_{k+1}^2 \mathbb E_\xi | \int_{\frac{k}\mu}^{\frac{k+1}\mu} \langle H_{X_{k+1}(s-\frac{k}\mu)}HA_{\frac{k}\mu,s}\phi,  g_\mu\left( T_{k,\mu}\right)\psi \rangle \mathrm ds|^2 \\
 \stackrel{\eqref{Jensen}}\le& \frac1\mu\mathbb{E}_X|\kappa_\mu(t)-\kappa_\mu(s)| \sum_{k=\kappa_\mu(s)}^{\kappa_\mu(t)-1} X_{k+1}^2 \int_{\frac{k}\mu}^{\frac{k+1}\mu} \mathbb E_\xi |\langle H_{X_{k+1}(s-\frac{k}\mu)}H A_{\frac{k}\mu,s} \phi,  g_\mu\left( T_{k,\mu}\right)\psi \rangle|^2 \mathrm ds \\
\le& \frac1\mu\mathbb{E}_X|\kappa_\mu(t)-\kappa_\mu(s)| \sum_{k=\kappa_\mu(s)}^{\kappa_\mu(t)-1} X_{k+1}^2 \int_{\frac{k}\mu}^{\frac{k+1}\mu} \mathbb E_\xi \|H_{X_{k+1}(s-\frac{k}\mu)}H A_{\frac{k}\mu,s} \phi\|^2 \|g_\mu\left( T_{k,\mu}\right)\psi\|^2 \mathrm ds
\end{split}
\end{equation}
where we used again Cauchy-Schwarz in the last line. Recalling definitions \eqref{e:partialflowA} and \eqref{b1}, we see that the two norm terms within each summand are independent with respect to $\mathbb P_\xi$, so we can separate them and then use the isometry \eqref{b5} and the unitarity of $H_t$:
\begin{equation}\label{b9}
\begin{split}
&\mathbb{E}|\langle\phi,I_1^\mu(t)-I_1^\mu(s)\rangle|^2 \\
 \le& \frac1\mu\mathbb{E}_X|\kappa_\mu(t)-\kappa_\mu(s)| \sum_{k=\kappa_\mu(s)}^{\kappa_\mu(t)-1} X_{k+1}^2 \int_{\frac{k}\mu}^{\frac{k+1}\mu} \mathbb E_\xi \|H_{X_{k+1}(s-\frac{k}\mu)}H A_{\frac{k}\mu,s} \phi\|^2 \cdot \mathbb E_\xi \|g_\mu\left(T_{k,\mu}\right)\psi\|^2 \mathrm ds \\
 =& \frac1\mu\mathbb{E}_X|\kappa_\mu(t)-\kappa_\mu(s)| \sum_{k=\kappa_\mu(s)}^{\kappa_\mu(t)-1} X_{k+1}^2 \|\psi\|^2 \int_{\frac{k}\mu}^{\frac{k+1}\mu} \mathbb E_\xi \|H A_{\frac{k}\mu,s} \phi\|^2 \mathrm ds \\
\end{split}
\end{equation}
Due to definition \eqref{e:partialflowA} and the stationary increments of the Wiener process, the integral simplifies to
\begin{equation}\label{b9.5}
\begin{split}
 \int_{\frac{k}\mu}^{\frac{k+1}\mu} \mathbb E_\xi \|H A_{\frac{k}\mu,s} \phi\|^2 \mathrm ds
 =& \int_{0}^{\frac{1}\mu} \mathbb E \|H A_{0,s} \phi\|^2 \mathrm ds \\
 \stackrel{\eqref{Guenter}}\le& 2 \int_{0}^{\frac{1}\mu} \mathbb E \|H (A_{0,s}-1) \phi\|^2 \mathrm ds
 +2 \int_{0}^{\frac{1}\mu} \mathbb E \|H\phi\|^2 \mathrm ds \le\frac C\mu
\end{split}
\end{equation}
where we used in the last inequality  Condition \ref{Bed}(I), namely that $\mathbb E \|H (A_{0,s}-1) \phi\|^2$ is bounded since it converges to zero. Substituting into \eqref{b9} and using Lemma \ref{lemm:kappa} (a), we get equicontinuity of $I_1^\mu(t)$:
\begin{equation*}
 \mathbb{E}|\langle\phi,I_1^\mu(t)-I_1^\mu(s)\rangle|^2 \le C(\sqrt{|t-s|}+(t-s)^2)
\end{equation*}

Starting again like in \eqref{b7}, we proceed with
\begin{align*}
 &\mathbb{E}|\langle\phi,I_2^\mu(t)-I_2^\mu(s)\rangle|^2
 =\mathbb{E}_X\mathbb E_\xi|\sum_{k=\kappa_\mu(s)}^{\kappa_\mu(t)-1} \int_{\frac{k}\mu}^{\frac{k+1}\mu} \langle H_{X_{k+1}(s-\frac{k}\mu)} A_{\frac{k}\mu,s} A \phi,  g_\mu\left( T_{k,\mu}\right)\psi \rangle \mathrm d\xi_s|^2\,.
\end{align*}
Now we interpret the sum of the integrals as one single Itô integral, to which we apply the Itô isometry:
\begin{align*}
 &\mathbb{E}|\langle\phi,I_2^\mu(t)-I_2^\mu(s)\rangle|^2
 =\mathbb{E}_X\sum_{k=\kappa_\mu(s)}^{\kappa_\mu(t)-1}  \int_{\frac{k}\mu}^{\frac{k+1}\mu}\mathbb E_\xi |\langle H_{X_{k+1}(s-\frac{k}\mu)}  A_{\frac{k}\mu,s} A \phi,  g_\mu\left( T_{k,\mu}\right)\psi \rangle|^2 \mathrm ds\,.
\end{align*}
Similarly to our treatment of the $I_1$ term, we arrive at
\begin{equation*}
 \mathbb{E}|\langle\phi,I_2^\mu(t)-I_2^\mu(s)\rangle|^2
 \le \frac{C}\mu\mathbb E |\kappa_\mu(t)-\kappa_\mu(s)| \le C|t-s| .
\end{equation*}
In the second inequality, we have used the fact that $|\kappa_\mu(t)-\kappa_\mu(s)|$, the number of jumps during $[s,t]$, is Poisson distributed with mean value $\mu|t-s|$.

The last term in \eqref{b6} can be estimated in the same way. One ends up with
\begin{align*}
 \mathbb{E}|\langle\phi,I_3^\mu(t)-I_3^\mu(s)\rangle|^2
%  \le& \frac{T}{4n}\mathbb{E}_X|\kappa_n(t)-\kappa_n(s)| \sum_{k=\kappa_n(s)}^{\kappa_n(t)-1} \int_{\frac{k}{n}T}^{\frac{k+1}{n}T} \mathbb E_\xi \|A^2 H_{X_{k+1}(s-\frac{k}{n}T)}H\phi\|^2 \mathrm ds \\
 \le&\frac{C}{\mu^2}\mathbb E (\kappa_\mu(t)-\kappa_\mu(s))^2 \le C|t-s| .
\end{align*}

Thus, equicontinuity of $(f_\mu)$ is proven and, along the way, one gets the estimate
\begin{equation}\label{gleichgradig}
 \mathbb E|\langle\phi, (g_\mu(t) - g_\mu(s))\psi\rangle|^2 \le C(\sqrt{|t-s|} + (t-s)^2) .
\end{equation}
Consequently, there exists a subsequence $(\mu_k^1)_{k\in\mathbb N}$ of $(\mu)_{\mu\in\mathbb N}$ such that $\mathbb{E}\overline{Y_1\zeta_1}\langle\phi_1,g_{\mu_k^1}(\cdot)\psi\rangle$ converges uniformly in t for $k\to\infty$ (recall definition \eqref{b4} and the choice \eqref{b5.5}), a subsequence $(\mu_k^2)$ of $(\mu_k^1)$ such that also
$\mathbb{E}\overline{Y_2\zeta_2}\langle\phi_2,g_{\mu_k^2}(\cdot)\psi\rangle$ converges and so on. For the diagonal sequence $(\mu_k^k)$ obviously $\mathbb{E}\overline{Y_j\zeta_j}\langle\phi_j,g_{\mu_k^k}(\cdot)\psi\rangle$ converges for all $j\in\mathbb N$ uniformly in t. From the isometry \eqref{b5} and the Cauchy-Schwarz inequality, one infers that
\begin{equation*}
 L^2(\Omega,\mathfrak{H})\ni\eta\mapsto\lim_{k\to\infty}\mathbb{E}\langle\eta,g_{n_k^k}(t)\psi\rangle\in\mathbb{R}
\end{equation*}
is a continuous functional. This implies, via the Riesz representation theorem, the existence of a process $\psi_t$ with
\begin{equation*}
 \lim_{k\to\infty}\mathbb{E}\overline{Y\zeta}\langle\phi,g_{\mu_k^k}(t)\psi\rangle = \mathbb{E}\overline{Y\zeta}\langle\phi,\psi_t\rangle
\end{equation*}
for all $Y, \zeta,\phi$. Since the limit was obtained by the Arzela-Ascoli theorem, it is uniform in $t\in[0,T]$ if $Y, \zeta,\phi$ are fixed.

In order to show that the constructed $\psi_t$ is actually the solution of \eqref{e:SSE}, the convergence of the summands in the weak form of \eqref{b3}, i.e. in
\begin{align}\label{schwach1}
 \mathbb{E}\overline{Y\zeta}\langle\phi,g_{\mu^k_k}(t)\psi\rangle =& \mathbb{E}\overline{Y\zeta}\langle\phi,\psi\rangle + \sum_{l=1}^3\mathbb{E}\overline{Y\zeta}\langle\phi,I_l^{\mu^k_k}(t)\rangle ,
\end{align}
to their counterparts in the weak version of \eqref{e:SSE}, i.e. in
\begin{equation}\label{schwach2}
 \mathbb E\overline{Y\zeta}\langle\phi,\psi_t\rangle
= \mathbb E\overline{Y\zeta}\langle\phi,\psi\rangle
-i \mathbb E\overline{Y\zeta}\int_0^t\langle H\phi,\psi_s\rangle \textnormal ds
+ \mathbb E\overline{Y\zeta}\int_0^t\langle A\phi,\psi_s\rangle \textnormal d\xi_s
-\frac12 \mathbb E\overline{Y\zeta}\int_0^t\langle A^2\phi,\psi_s\rangle \textnormal ds ,
\end{equation}
has to be checked. For simplicity, we will write $\mu$ instead of $\mu_k^k$, but recall that we are dealing with a subsequence.
The left-hand side was just settled, so we start with the $l=1$ term: Applying Cauchy-Schwarz inequality, we arrive at
\begin{equation}\label{b12}
\begin{split}
 &|\mathbb{E}\overline{Y\zeta}\langle\phi,I_1^{ \mu}(t)\rangle - \mathbb{E}\overline{Y\zeta}\int_0^t\langle iH\phi,\psi_s\rangle \mathrm ds|^2 \\
\le& C\mathbb E |\langle\phi,I_1^{ \mu}(t)\rangle - \int_0^t\langle iH\phi,\psi_s\rangle \mathrm ds|^2 \\
 \le& C\mathbb E |\sum_{k=0}^{\kappa_ \mu(t)-1}(X_{k+1}-1) \int_{\frac{k}{ \mu}}^{\frac{k+1}{ \mu}}  \langle\phi, A_{\frac{k}{ \mu},s} H_{X_{k+1}(s-\frac{k}{ \mu})}H g_{ \mu}\left( T_{k,\mu}\right)\psi \rangle \mathrm ds |^2 \\
 &+ C\mathbb E |\sum_{k=0}^{\kappa_ \mu(t)-1} \int_{\frac{k}{ \mu}}^{\frac{k+1}{ \mu}}  \langle H_{X_{k+1}(s-\frac{k}{ \mu})}H (A_{\frac{k}{ \mu},s}-1) \phi,  g_{ \mu}\left( T_{k,\mu}\right)\psi \rangle \mathrm ds |^2 \\
 &+ C\mathbb E |\sum_{k=0}^{\kappa_ \mu(t)-1} \int_{\frac{k}{ \mu}}^{\frac{k+1}{ \mu}}  \langle (H_{X_{k+1}(s-\frac{k}{ \mu})}-1)H \phi,  g_{ \mu}\left( T_{k,\mu}\right)\psi \rangle \mathrm ds |^2 \\
 &+ C\mathbb E | \int_{\frac{\kappa_\mu(t)}{\mu}}^{t}  \langle H \phi, g_{\mu}(T_{\lfloor\mu s\rfloor,\mu})\psi \rangle \mathrm ds |^2 \\
 &+ C\mathbb E | \int_0^t \langle H \phi, (g_{\mu}(T_{\lfloor\mu s\rfloor,\mu})-g_\mu(s))\psi \rangle \mathrm ds |^2 \\
 &+ C\mathbb E | \int_0^t \langle H \phi, g_\mu(s)\psi-\psi_s \rangle \mathrm ds |^2 \\
 :=& \sum_{l=1}^6 J_l^\mu(t) .
\end{split}
\end{equation}
In the second step, we have substituted definintion \eqref{b3}, expanded the difference in a telescopic sum and applied the inequality \eqref{Guenter}.

Again, all the summands have to go to 0 for $\mu\to\infty$.
We start by expanding the square in $J_1^\mu$:
\begin{align*}
 J_1^\mu(t) \mathbb E_X\sum_{j,k=0}^{\kappa_\mu(t)-1}(X_{j+1}-1)(X_{k+1}-1)
 \mathbb E_\xi &\bigg(\overline{\int_{\frac{j}{ \mu}}^{\frac{j+1}{ \mu}}  \langle\phi, A_{\frac{j}{ \mu},s} H_{X_{j+1}(s-\frac{j}{ \mu})}H g_{ \mu}\left( T_{j,\mu}\right)\psi \rangle \mathrm ds} \\
 &\times\int_{\frac{k}{ \mu}}^{\frac{k+1}{ \mu}}  \langle\phi, A_{\frac{k}{ \mu},s} H_{X_{k+1}(s-\frac{k}{ \mu})}H g_{ \mu}\left( T_{k,\mu}\right)\psi \rangle \mathrm ds \bigg)
\end{align*}
Estimating the integrals by $\frac C\mu^2$ as we did in \eqref{b9} and \eqref{b9.5} and writing the sum as a square again, we continue
\begin{equation}\label{b15}
\begin{split}
 J_1^\mu(t)
 \le& \frac C{\mu^2} \left(\mathbb E_X \sum_{k=0}^{\kappa_\mu(t)-1}(X_{k+1}-1)\right)^2
 = \frac C{\mu^2} (\mathbb E \dots 1_{\kappa_\mu(t)\le6\mu t} + \mathbb E \dots 1_{\kappa_\mu(t) > 6\mu t})^2 \\
 \le& \frac C{\mu^2} (\mathbb E \dots 1_{\kappa_\mu(t)\le6\mu t})^2 + (\mathbb E \dots 1_{\kappa_\mu(t) > 6\mu t})^2
\end{split}
\end{equation}
Taking into account the independence of the $X_k$,
\begin{equation*}
 \frac1{\mu^2}(\mathbb E \dots 1_{\kappa_\mu(t)\le6\mu t})^2
 \le \frac1{\mu^2}\mathbb E \sum_{k=0}^{6\mu t-1}(X_{k+1}-1)^2 \le \frac{6\mu t}{\mu^2} \to0 ,
\end{equation*}
and using Jensen's inequality for $\mathbb E$, \eqref{Guenter} for the sum and the crude estimate $X_k\le\mu t$ which, by definition \eqref{kappa}, holds true for $k\le\kappa_\mu(t)$ and, via \eqref{Guenter}, implies $(X_{k+1}-1)^2\le 2\mu^2 t^2+2$,
\begin{align*}
 \frac1{\mu^2}(\mathbb E \dots 1_{\kappa_\mu(t) > 6\mu t})^2
 \le \frac1{\mu^2}\mathbb E \kappa_\mu(t) \sum_{k=0}^{\kappa_\mu(t)-1}(X_{k+1}-1)^2 1_{\kappa_\mu(t) > 6\mu t}
 \le C\mathbb E \kappa_\mu(t)^2 1_{\kappa_\mu(t) > 6\mu t} ,
\end{align*}
so that Lemma \ref{lemm:kappa} completes the proof that \eqref{b15} and thus $J_1^\mu(t)$ goes to 0.

All the other summands in \eqref{b12} are to be treated analogously, and we shall only indicate the points at which new arguments are needed or the conditions enter:

One gets
\begin{equation}
 J_2^\mu(t)\le C \sup_{0\le t\le\frac1\mu}\mathbb E\|H(A_{0,t}-1)\phi\|^2 ,
\end{equation}\label{b19}
which goes to 0 by condition (I),
and
\begin{equation}\label{b20}
 J_3^\mu(t)\le C\mathbb E \|(H_{\frac{X_{k+1}}{\mu}}-1)H\phi\|^2 .
\end{equation}
Since, by assumption (I), $\phi\in Dom(H^2)$, we can calculate
\begin{equation}\label{b21}
 \|(H_t-1)H\phi\| = \|\int_0^t H_sH^2\phi\text ds\| \le t\|H^2\phi\|
\end{equation}
and, substituting into \eqref{b20}, end up with
\begin{equation}\label{b22}
 J_3^\mu(t)\le \frac C{\mu^2}\mathbb E X_k^2 \to0 .
\end{equation}
Furthermore,
\begin{equation*}
 J_4^\mu(t)\le \frac C{\mu^2}\mathbb E(\kappa_\mu(t)-\mu t)^2 = \frac {C\mu t}{\mu^2} \to0 .
\end{equation*}
For $J_5^\mu$, we need the equicontinuity estimate \eqref{gleichgradig}:
\begin{equation}\label{b25}
\begin{split}
 J_5^\mu(t)\le& C\int_0^t\mathbb E (\sqrt{|T_{\lfloor\mu s\rfloor,\mu}-s|} + (T_{\lfloor\mu s\rfloor,\mu}-s)^2) \text ds \\
 \le& C \sum_{k=0}^{\mu t}\int_{\frac k\mu}^{\frac{k+1}\mu} \mathbb E(\sqrt{|T_{k,\mu}-s|} + (T_{k,\mu}-s)^2) 
\end{split}
\end{equation}
Now, within each integral,
\begin{equation*}
 \mathbb E(T_{k,\mu}-s)^2 = \mathbb V T_{k,\mu} + (s-\frac k\mu)^2 \le \frac k{\mu^2} + \frac1{\mu^2} \le\frac C\mu ,
\end{equation*}
\begin{equation*}
 \mathbb E\sqrt{|T_{k,\mu}-s|} \le \left(E(T_{k,\mu}-s)^2\right)^\frac14 \le C\mu^{-\frac14}
\end{equation*}
and, substituting back into \eqref{b25}, we end up with
\begin{equation*}
 J_5^\mu(t)\le C\mu^{-\frac14} \to0 .
\end{equation*}

$J_6^\mu(t)\to0$ by construction of $\psi_t$ as the uniform weak limit of $g_\mu(t)$.

For the $j=2$ summand in \eqref{schwach1},
\begin{equation*}
 |\mathbb{E}\overline{Y\zeta}\langle\phi,I_2^{n}(t)\rangle - \mathbb{E}\overline{Y\zeta}\int_0^t\langle A\phi,\psi_s\rangle \mathrm d\xi_s |^2
 \le C |\mathbb{E}\overline{\zeta}\int \dots \mathrm d\xi_s |^2 .
\end{equation*}
This can be reduced to an ordinary integral and treated like the preceding ones if one restricts w.l.o.g. $\zeta$ to the total set of iterated stochastic integrals \cite{Hackenbroch} defined via
\begin{equation}\label{itint}
M_0:\equiv1, M_{l+1}(t):=\int_0^ta(s)M_l(s)\textnormal d\xi_s
\end{equation}
with indicator functions a and applies the Itô isometry
\begin{equation*}
 \mathbb E (\int f\textnormal d\xi_t \int g\textnormal d\xi_t) = \mathbb E \int fg\textnormal dt .
\end{equation*}
The last summand of \eqref{schwach1} goes through like the first one. In place of \eqref{b19}, the terms
\begin{equation*}
 \mathbb E\|(A_{0,\frac1\mu}-1)A^j\phi\|^2
\end{equation*}
with $j=1\text{ and }2$ will come up, and one needs the condition $\phi\in Dom(A^4)$ in order to conclude from \eqref{Anfangswert} that they go to 0.

Consequently, the constructed process $\psi_t$ weakly solves equation \eqref{e:SSE}. As far as its solution is unique, it agrees with the one obtained by Holevo (\cite{Holevo}). We shall review his uniqueness proof in the appendix.
The same method also allows to start with any subsequence $g_{\tilde \mu}(t)\psi$ and extract a sub-subsequence converging weakly to the same $\psi_t$, so $g_{\mu}(t)\psi \rightharpoonup \psi_t$ holds true without restriction to a subsequence.

Compared to the right-hand side of \eqref{produkt}, the first factor is missing in the definition \eqref{b1} of $g_\mu(t)$. Of course, both versions have the same limit because one single factor approaches the identity for $\mu\to\infty$:
\begin{align*}
 \mathbb E Y\zeta\langle\phi, (H_{t-T_{\kappa_\mu(t),\mu}}\prod_{k=0}^{\kappa_\mu(t)-1} A_{\frac{k}{\mu},\frac{k+1}{\mu}} H_{\frac{X_{k+1}}\mu} - g_\mu(t)) \psi\rangle
 \le& C\mathbb E\|(H_{t-T_{\kappa_\mu(t),\mu}} - 1)\phi\|^2 \\
 \stackrel{\eqref{b21}}\le& C \mathbb E(t-T_{\kappa_\mu(t),\mu})^2 = \frac C{\mu^2} \to0
\end{align*}
where we took into account that $t-T_{\kappa_\mu(t),\mu}$, the waiting time for the last jump of $\kappa_\mu$ before t, has exponential distribution with parameter $\mu$.

Since $\lim_{n\to\infty}\mathbb{E}\|g_{\mu}(t)\psi\|²=\|\psi\|²$ holds true and, under the conditions from \cite{Mora},  $\psi\in\mathcal{M}\cap D(R)$ also satisfies $\mathbb{E}\|\psi_t\|^2=\|\psi\|^2$, in this case the convergence $g_{\mu}(t)\to C_{0,t}$ is even strong. Since $C_{0,t}$ as well as all operators $g_{\mu}$ have $L^2(\Omega,\mathfrak{H})$-norm 1, the strong convergence on the dense subset $\mathcal M\cap D(R)$ implies the convergence on the whole of $\mathfrak{H}$.
\end{proof}

\begin{remark}
 It is not essential whether we group the factors in the order $A_{\frac k\mu, \frac{k+1}\mu}H_{\frac{X_{k+1}}{\mu}}$, as we did in the telescopic sum expansion \eqref{b2}, or in the order $H_{\frac{X_{k+1}}{\mu}}A_{\frac k\mu, \frac{k+1}\mu}$. If we had decided to do the latter, we would have had to require
\begin{equation*}
 \|A^j(H_t-1)\phi\|\le Ct \text{ for } j\in\{1,2\}
\end{equation*}
instead of
\begin{equation*}
 \lim_{t\to0} \mathbb E\|H(A_{0,t}-1)\phi\|^2 =0
\end{equation*}
in condition (I). In both cases, one can get the product formula in the form \eqref{produkt} or with $A_{\frac k\mu, \frac{k+1}\mu}$ coming before $H_{\frac{X_{k+1}}{\mu}}$.
\end{remark}

I still remains to provide the following technical lemma.
\begin{lemma}\label{lemm:kappa}
As in Theorem \ref{thm:Produkt} let $X_1, X_2, \dots$ denote  independent exponentially distributed random variables and let $\kappa_\mu(t)$ be the ``number of hittings''
\begin{equation}\label{kappa}
 \kappa_\mu(t) := \max\biggl\{k\in\mathbb{N}:\sum_{j=1}^k \frac{X_i}\mu\le t \biggr\}
\end{equation}
before t. Then
\begin{align*}
 a)& \frac{1}{\mu^2}\mathbb E\biggl[ |\kappa_\mu(t)-\kappa_\mu(s)| \sum_{k=\kappa_\mu(s)}^{\kappa_\mu(t)-1}X_{k+1}^2\biggr] \le C(\sqrt{|t-s|} + |t-s|^2) \textnormal{ } uniformly\textnormal{ } in \textnormal{ } \mu \in \mathbb{N} \\
 b)& \lim_{\mu\to\infty} \mathbb E \chi_{\kappa_\mu(t)\ge 6\mu t} \kappa_\mu(t)^2 = 0 \textnormal{ }for\textnormal{ }every\textnormal{ }t .
\end{align*}
\end{lemma}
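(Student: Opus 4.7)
The plan is to reduce everything to the strong Markov structure of the Poisson process at time $s$. Assume $s \le t$ without loss of generality and write $r := t - s$ and $N := \kappa_\mu(t) - \kappa_\mu(s)$; by stationary independent increments, $N \sim \mathrm{Poisson}(\mu r)$. I would decompose the inter-arrival time straddling $s$ as $X_{\kappa_\mu(s)+1} = \mu W_0 + \tilde X_1$, where $W_0 := s - T_{\kappa_\mu(s),\mu}$ is the backward age at $s$ (with $\mathbb E W_0^2 \le 2/\mu^2$) and $\tilde X_1, \tilde X_2, \dots$ are i.i.d.\ $\mathrm{Exp}(1)$ inter-arrival times of the Poisson process restarted at $s$; by memorylessness, $(N, \tilde X_1, \tilde X_2, \dots)$ is independent of $W_0$. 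Since $X_{\kappa_\mu(s)+k} = \tilde X_k$ for $k \ge 2$, the elementary inequality $(a+b)^2 \le 2a^2 + 2b^2$ yields $\sum_{k=\kappa_\mu(s)+1}^{\kappa_\mu(t)} X_k^2 \le 2\mu^2 W_0^2 + 2\sum_{k=1}^N \tilde X_k^2$.

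The $W_0$ term then contributes $2\mathbb E N \cdot \mathbb E W_0^2 \le 4r/\mu$ to $\frac{1}{\mu^2}\mathbb E[N \sum X_k^2]$, which is dominated by $4(\sqrt r + r^2)$ for $\mu \ge 1$ (check $r \le 1$ and $r > 1$ separately). For the main term I would condition on $\{N = n\}$: the $n$ Poisson arrivals in $[s, t]$ are then $n$ i.i.d.\ uniforms on $[s, t]$, so $\tilde X_1/\mu, \dots, \tilde X_n/\mu$ are the first $n$ of the $n+1$ exchangeable uniform spacings, each $\mathrm{Beta}(1, n)$ scaled to $[0, r]$ with second moment $2r^2/((n+1)(n+2))$. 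This gives $\mathbb E[\sum_{k=1}^n \tilde X_k^2 \mid N = n] = 2\mu^2 n r^2/((n+1)(n+2))$, and summing over $n$, $\mathbb E\bigl[N \sum_{k=1}^N \tilde X_k^2\bigr] = 2\mu^2 r^2 \sum_n \mathbb P(N = n) \cdot n^2/((n+1)(n+2)) \le 2\mu^2 r^2$, using $n^2 \le (n+1)(n+2)$. Combining, $\frac{1}{\mu^2}\mathbb E[N \sum X_k^2] \le C(\sqrt r + r^2)$ uniformly in $\mu \in \mathbb N$, proving (a).

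Part (b) is a short large-deviations argument: $\kappa_\mu(t) \sim \mathrm{Poisson}(\mu t)$, so the Chernoff bound gives $\mathbb P(\kappa_\mu(t) \ge 6\mu t) \le e^{-(6\log 6 - 5)\mu t}$. Combined with the polynomial moment $\mathbb E \kappa_\mu(t)^4 = O((\mu t)^4)$ and Cauchy--Schwarz, $\mathbb E[\chi_{\kappa_\mu(t) \ge 6\mu t} \kappa_\mu(t)^2] \le \sqrt{\mathbb P(\kappa_\mu(t) \ge 6\mu t)} \sqrt{\mathbb E \kappa_\mu(t)^4} = O((\mu t)^2 e^{-c \mu t}) \to 0$ as $\mu \to \infty$.

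\emph{The main obstacle} lies in part (a): the naive pointwise bound $X_k \le \mu r$ (valid for $k \le \kappa_\mu(t)$) produces a term of order $\mu^2 r^4$ after division by $\mu^2$, which blows up with $\mu$. The key is the dichotomy between the length-biased boundary gap $X_{\kappa_\mu(s)+1}$ --- which is $O(1)$ in natural units and must be isolated via $W_0$ --- and the interior gaps, which are typically of order $1/\mu$, a fact quantified by the $\mathrm{Beta}$-spacings second moment.
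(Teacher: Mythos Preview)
Your proof is correct and takes a genuinely different route from the paper's. Both arguments split off the inter-arrival time straddling $s$ into backward and forward parts (your $\mu W_0 + \tilde X_1$, the paper's $Y+Z$), but they diverge thereafter. For the bulk term, the paper cuts at the threshold $\{\kappa_\mu(t)\le 6\mu t\}$: on the good set it bounds $\kappa_\mu(t)\sum X_k^2$ by $6\mu t\sum_{k<6\mu t}X_k^2$ and takes expectations termwise; on the bad set it uses the crude bound $X_k\le\mu t$ and then a Stirling-based Poisson tail estimate for $\mathbb E[\kappa_\mu(t)^2\,1_{\kappa_\mu(t)>6\mu t}]$ (this same estimate yields part (b) as a byproduct). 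Your argument instead conditions on $N$ and exploits the exact Beta$(1,n)$ second moment of uniform spacings to get $\mathbb E[N\sum_{k\le N}\tilde X_k^2]\le 2\mu^2 r^2$ in one stroke, with no threshold or tail analysis. For the boundary piece, the paper bounds $\frac1{\mu^2}\mathbb E[N(2Y^2+Z^2)]$ via Cauchy--Schwarz against $(\mathbb E N^2)^{1/2}$, which is the source of the $\sqrt{|t-s|}$; you use the independence of $W_0$ from the restarted process to get the sharper $4r/\mu$ directly, then fold it into $\sqrt r + r^2$. For (b), instead of deriving it from the Stirling computation, you use Chernoff plus Cauchy--Schwarz with the fourth moment, which is cleaner. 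What your approach buys is a shorter, more structural proof and a strictly better bound ($r/\mu+r^2$ versus the stated $\sqrt r+r^2$); what the paper's approach buys is that part (b) falls out for free from the part (a) computation rather than requiring a separate Chernoff step.
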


\begin{proof}
 We start with the case $s=0$, decompose
\begin{equation}\label{l1}
 \frac{1}{\mu^2}\mathbb E \biggl[|\kappa_\mu(t)| \sum_{k=0}^{\kappa_\mu(t)-1}X_{k+1}^2\biggr]
 = \frac{1}{\mu^2}\mathbb E\bigl[ \dots 1_{\kappa_\mu(t)\le 6\mu t}\bigr] + \frac{1}{\mu^2}\mathbb E\bigl[ \dots 1_{\kappa_\mu(t)> 6\mu t}\bigr]
\end{equation}
(you will see where the ``6'' comes from) and find
\begin{equation*}
 \frac{1}{\mu^2}\mathbb E \dots 1_{\kappa_\mu(t)\le 6\mu t}
 \le \frac{1}{\mu^2} \sum_{k=0}^{6\mu t-1}\mathbb E 6\mu tX_{k+1}^2 \le Ct^2 .
\end{equation*}
As for the second summand in \eqref{l1}, we know from the definition of $\kappa_\mu$ that at least all $X_k\le \mu t$, so
\begin{equation}\label{l2}
 \frac{1}{\mu^2}\mathbb E \dots 1_{\kappa_\mu(t)> 6\mu t}
 \le t^2 \mathbb E |\kappa_\mu(t)|^21_{\kappa_\mu(t)> 6\mu t} .
\end{equation}
$\kappa_\mu(t)$ is Poisson distributed with parameter $\mu t$ and the important point which we must not hide by too crude estimates is the exponential decay of
$\mathbb P(\kappa_\mu(t)> 6\mu t)$ in $\mu t$, therefore we compute
\begin{align*}
 &\frac{1}{\mu^2}\mathbb E \dots 1_{\kappa_\mu(t)> 6\mu t}
 = Ct^2\sum_{k=\lceil 6\mu t\rceil}^\infty k^2\frac{(\mu t)^k}{k!} e^{-\mu t} \\
 =& Ct^2 e^{-\mu t} \mu t \sum_{k=\lceil 6\mu t\rceil}^\infty k\frac{(\mu t)^{k-1}}{(k-1)!}
 = Ct^2 e^{-\mu t} \mu t \left( \sum_{k=(\lceil 6\mu t\rceil-1)\vee0}^\infty k\frac{(\mu t)^k}{k!} + \sum_{k=(\lceil 6\mu t\rceil-1)\vee0}^\infty \frac{(\mu t)^k}{k!} \right) \\
 =& Ct^2 e^{-\mu t} \mu t \left( \mu t\sum_{k=(\lceil 6\mu t\rceil-2)\vee0}^\infty \frac{(\mu t)^k}{k!} + \sum_{k=(\lceil 6\mu t\rceil-1)\vee0}^\infty \frac{(\mu t)^k}{k!} \right) .
\end{align*}
Writing
$$\{6\mu t\}:=(\lceil 6\mu t\rceil-1)\vee 0 ,$$
applying the Stirling formula and observing that
\begin{equation}\label{l5}
\frac{\mu t}{\{6\mu t\}}\le \frac13 \text{ if } \{6\mu t\}\ge1 ,
\end{equation}
we obtain
\begin{equation}\label{l10}
\begin{split}
 \frac{1}{\mu^2}\mathbb E \dots 1_{\kappa_\mu(t)> 6\mu t}
 \le& Ct^2 e^{-\mu t} ((\mu t)^2+\mu t) \sum_{k=\{6\mu t\}}^\infty \frac{(\mu t)^k}{k!} \\
 \le&
 \begin{cases}
  Ct^2 e^{-\mu t} ((\mu t)^2+\mu t) \frac{(\mu t)^{\{6\mu t\}}}{\{6\mu t\}!} \sum_{k=0}^\infty \left(\frac{\mu t}{\{6\mu t\}}\right)^k & \textnormal{for } \{6\mu t\}\ge1 \\
  Ct^2((\mu t)^2+\mu t) & \textnormal{for } \{6\mu t\}=0
 \end{cases} \\
 \stackrel{\eqref{l5}}\le&
 \begin{cases}
  Ct^2 e^{-\mu t} ((\mu t)^2+\mu t) \frac{(\mu t)^{\{6\mu t\}}e^{\{6\mu t\}}}{\sqrt{2\pi\{6\mu t\}}\{6\mu t\}^{\{6\mu t\}}} \frac{1}{1-\frac13} & \textnormal{for } \{6\mu t\}\ge1 \\
  Ct^2((\mu t)^2+\mu t) & \textnormal{for } \{6\mu t\}=0
 \end{cases} \\
 \le&
 \begin{cases}
  Ct^2 ((\mu t)^2+\mu t) \left(\frac e3\right)^{\{6\mu t\}}  & \textnormal{for } \{6\mu t\}\ge1 \\
  Ct^2((\mu t)^2+\mu t) & \textnormal{for } \{6\mu t\}=0
 \end{cases} \\
 \le& Ct^2 (\{6\mu t\}^2+\{6\mu t\}+1)\left(\frac e3\right)^{\{6\mu t\}} \le Ct^2 .
\end{split}
\end{equation}
This proves part a) for $s=0$; note that, in going from \eqref{l2} to the second last step in \eqref{l10}, we have also proven part b) along the way.

In case that $s\ne0$, one would be inclined to say that
\begin{equation*}
 \frac{1}{\mu^2}\mathbb E\biggl[ |\kappa_\mu(t)-\kappa_\mu(s)| \sum_{k=\kappa_\mu(s)}^{\kappa_\mu(t)-1}X_{k+1}^2\biggr]
 = \frac{1}{\mu^2}\mathbb E\biggl[ \kappa_\mu(|t-s|) \sum_{k=0}^{\kappa_\mu(|t-s|)-1}X_{k+1}^2\biggr]
\end{equation*}
because the Poisson process $\kappa_\mu$ has stationary increments. In doing so, one would, however, neglect that the ``first'' X summand decomposes into
\begin{equation*}
 X_{\kappa_\mu (s)+1}=(\sum_{k=0}^{\kappa_\mu (s)}X_{k+1} - \mu s) + (\mu s - \sum_{k=0}^{\kappa_\mu (s)-1}X_{k+1}) =: Y+Z ,
\end{equation*}
where $\frac Y\mu$ is the waiting time for the last jump of $\kappa_\mu$ before s and $\frac Z\mu$ the waiting time for the next jump after s. A correct use of stationarity implies
\begin{equation}\label{l20}
 \frac{1}{\mu ^2}\mathbb E |\kappa_\mu (t)-\kappa_\mu (s)| (Z^2+\sum_{k=\kappa_\mu (s)+1}^{\kappa_\mu (t)-1}X_{k+1}^2)
 = \frac{1}{\mu ^2}\mathbb E |\kappa_\mu (|t-s|)| \sum_{k=0}^{\kappa_\mu (|t-s|)-1}X_{k+1}^2 \le C|t-s|^2 .
\end{equation}
Taking into account that Y and Z have independent exponential distributions and that \\
$X_{\kappa_\mu (s)+1}^2 \le 2Y^2+2Z^2$, we end up with
\begin{align*}
 &\frac{1}{\mu ^2}\mathbb E |\kappa_\mu (t)-\kappa_\mu (s)| \sum_{k=\kappa_\mu (s)}^{\kappa_\mu (t)-1}X_{k+1}^2 \\
 \le& \frac{1}{\mu ^2}\mathbb E |\kappa_\mu (t)-\kappa_\mu (s)| (2Y^2+Z^2)
 + \frac{1}{\mu ^2}\mathbb E |\kappa_\mu (t)-\kappa_\mu (s)| (Z^2+\sum_{k=\kappa_\mu (s)+1}^{\kappa_\mu (t)-1}X_{k+1}^2) \\
 \le& \frac{1}{\mu ^2}(\mathbb E |\kappa_\mu (t)-\kappa_\mu (s)|^2)^{\frac 12} (\mathbb E (2Y^2+Z^2)^2)^\frac12 + C|t-s|^2
 \le C(\sqrt{|t-s|} + |t-s|^2)
\end{align*}
where, in the second last step, we have used Cauchy-Schwarz and \eqref{l20}.
\end{proof}

\section*{Appendix: Uniqueness of the weak solution of \eqref{e:SSE}}

We recall Holevo's proof (p. 487 of \cite{Holevo}) that weak solutions of dissipative stochastic equations are unique under the growth condition
\begin{equation}\label{Wachstum}
\mathbb{E}\|\psi_t\|^2 \le \|\psi_0\|^2 e^{ct} \textnormal{ for some } c ,
\end{equation}
provided that
$$ -iH + \frac12 A^2 \text{ is maximally accretive},$$
and explain one essential step in the his proof which remained unclear to us in \cite{Holevo} (see below). For simplicity, we formulate things in terms of our equation \eqref{e:SSE} instead of his more general framework.

Suppose that $\psi^1_t$ and $\psi^2_t$ both solve \eqref{e:SSE} with the same initial condition $\psi_0$ and satisfy \eqref{Wachstum}. Defining $\delta_t:=\psi^1_t-\psi^2_t$, we have to show $\mathbb{E}\overline{M_l(t)}\langle\phi,\delta_t\rangle=0$ for all $\phi$ in a dense subset of $\mathfrak H$ and iterated integrals $M_l$ as specified in \eqref{itint}. The weak form \eqref{schwach2} of equation \eqref{e:SSE} gives
\begin{equation}\label{733}
 \overline{M_0(t)}\langle\phi,\delta_t\rangle=\langle\phi,\delta_t\rangle
 = \int_0^t\langle A\phi,\delta_s\rangle\mathrm{d}\xi_s
 +\int_0^t\langle (iH-\frac{1}{2}A^2)\phi,\delta_s\rangle\mathrm{d}s
\end{equation}
and, according to the product formula for It\^o differentials,
\begin{equation}\label{734}
\begin{split}
 \overline{M_l(t)}\langle\phi,\delta_t\rangle =& \int_0^t \overline{M_{l-1}(s)a(s)}\langle\phi,\delta_s\rangle \textnormal d\xi_s
 +\int_0^t \overline{M_l(s)}\langle A\phi,\delta_s\rangle \textnormal d\xi_s
 +\int_0^t \overline{M_l(s)}\langle (iH-\frac{1}{2}A^2)\phi,\delta_s\rangle \textnormal ds \\
 &+\int_0^t \overline{M_{l-1}(s)} \overline{a(s)}\langle A\phi,\delta_s\rangle \textnormal{d}s
\end{split}
\end{equation}
for $l\ge1$. Taking for granted for a moment that the stochastic integrals have mean value 0 and defining $m_l(t):=\mathbb{E}\overline{M_l(t)}\delta_t$, these equations imply
\begin{equation}\label{735}
 \langle\phi,m_0(t)\rangle=\int_0^t\langle (iH-\frac{1}{2}A^2)\phi,m_0(s)\rangle\textnormal ds
\end{equation}
and for $l\ge1$
\begin{equation}\label{736}
 \langle\phi,m_l(t)\rangle = \int_0^t\langle a(s)A\phi,m_{l-1}(s)\rangle \textnormal ds
 +\int_0^t\langle (iH-\frac{1}{2}A^2)\phi,m_l(s)\rangle \textnormal ds .
\end{equation}
and one can inductively prove $m_l\equiv0$: Starting with $l=0$, we somehow have to bring the two terms $m_0$ in equation \eqref{735} together. For this sake, we take the Laplace transform, perform partial integration on the right-hand side und thus get rid of the additional integral in front of one $m_0$:
\begin{align*}
 \int_0^te^{-\lambda s}\langle\phi,m_0(s)\rangle\textnormal ds
 =& \int_0^te^{-\lambda s}\int_0^s\langle (iH-\frac{1}{2}A^2)\phi,m_0(r)\rangle \textnormal dr\textnormal ds \\
 =& \frac{1}{\lambda}\int_0^te^{-\lambda s}\langle (iH-\frac{1}{2}A^2)\phi,m_0(s)\rangle \textnormal ds
\end{align*}
Multiplying with $\lambda$ and dragging out the scalar products, we get
\begin{equation}\label{290}
 \langle(iH-\frac{1}{2}A^2-\lambda)\phi,\int_0^te^{-\lambda s}m_0(s)\textnormal ds\rangle = 0 .
\end{equation}
Now let $\lambda>0$. Since $-iH+\frac{1}{2}A^2$ was assumed to be m-accretive, $iH-\frac{1}{2}A^2-\lambda$ is surjective and \eqref{290} implies $\int_0^te^{-\lambda s}m_0(s)\textnormal ds=0$
for all $\lambda>0$ and thus $m_0\equiv0$.
In the induction step $l-1\to l$, the first integral in \eqref{736} is 0 according to the induction hypothesis, so $m_l\equiv0$ follows from this equation along the same lines as $m_0\equiv0$.

We still have to prove that the sum of the stochastic integrals in \eqref{734}, i.e. an expression of the form
\begin{equation*}
 J_t:=\int_0^t M_{l-1}(s)X_s \textnormal d\xi_s
 +\int_0^t M_l(s)Y_s \textnormal d\xi_s
\end{equation*}
with
\begin{equation}\label{a2}
\sup_{0\le s\le t}\mathbb{E}X_s^2, \sup_{0\le s\le t}\mathbb{E}Y_s^2<\infty
\end{equation}
according to \eqref{Wachstum}, has mean value 0. Note that the integrands are products of complex $L^2(\Omega)$-processes, so the usual $L^2$-theory of stochastic integrals, providing their martingale property, does not apply. Holevo claims without proof that already
\begin{equation*}
\sup_{0\le s\le t}\mathbb{E}|M_{l-1}(s)X_s|, \sup_{0\le s\le t}\mathbb{E}|M_l(s)Y_s|<\infty
\end{equation*}
implies the martingale property of $J_t$, but we do not see the reason. 

Our argument is as follows:
We compute $\mathbb EJ_t$, making use of the additional facts that $J_t \in L^1(\Omega)$ (because all the other summands in \eqref{734} are) and that the $M_l$ are much smoother than typical $L^2$-processes:

Define the stopping times
\begin{equation*}
\sigma_n:=\inf\{t\ge0:|M_{l-1}(t)|\ge n\textnormal{ or }|M_l(t)|\ge n\} .
\end{equation*}
Then
\begin{equation*}
 \begin{split}
  J_t=& \int_0^t M_{l-1}^{\sigma_n}(s)X_s \textnormal d\xi_s +\int_0^t M_l^{\sigma_n}(s)Y_s \textnormal d\xi_s \\
  &+\int_0^t (M_{l-1}(s)-M_{l-1}^{\sigma_n}(s))X_s \textnormal d\xi_s +\int_0^t (M_l(s)-M_l^{\sigma_n}(s))Y_s \textnormal d\xi_s .
 \end{split}
\end{equation*}
The first two integrands are majorized by $n|X_s|$ and $n|Y_s|$, so, in view of \eqref{l2}, the integrals have mean value zero. The last two integrands are zero before being stopped, in particular on $\{\sigma_n> t\}$, so the Riemann sums approximating the integrals in probability and consequently the integrals themselves are zero on this set. Summarizing,
\begin{equation*}
 \begin{split}
  \mathbb EJ_t=& \mathbb E \chi_{\{\sigma_n\le t\}}\left( \int_0^t (M_{l-1}(s)-M_{l-1}^{\sigma_n}(s))X_s \textnormal d\xi_s +\int_0^t
  (M_l(s)-M_l^{\sigma_n}(s))Y_s \textnormal d\xi_s\right) \\
  =& \mathbb E \chi_{\{\sigma_n\le t\}} J_t - \mathbb E \chi_{\{\sigma_n\le t\}} \int_0^t M_{l-1}^{\sigma_n}(s)X_s \textnormal d\xi_s
  - \mathbb E \chi_{\{\sigma_n\le t\}} \int_0^t M_{l}^{\sigma_n}(s)Y_s \textnormal d\xi_s
 \end{split}
\end{equation*}
Now all we need is a sufficient decay of $P(\{\tau_n\le t\})$ for $n\to\infty$, namely
\begin{equation}\label{Menge}
 \mathbb P(\{\sigma_n\le t\})=\mathcal O\left(\frac{1}{n^3}\right) .
\end{equation}
Then $\mathbb E \chi_{\{\sigma_n\le t\}} J_t \to0$ for $n\to\infty$ by the dominated convergence theorem and Cauchy-Schwarz inequality and It\^o's formula would give us
\begin{align*}
 |\mathbb E \chi_{\{\sigma_n\le t\}} \int_0^t M_{l-1}^{\sigma_n}(s)X_s \textnormal d\xi_s|^2
 \le \mathbb P(\{\sigma_n\le t\})^2 \mathbb E |\int_0^t M_{l-1}^{\sigma_n}(s)X_s \textnormal d\xi_s|^2
 \le \frac{C}{n^3}n^2 \int_0^t \mathbb{E}|X_s|^2 \textnormal ds \to0
\end{align*}
and the same for the last summand.
According to Doob's inequality,
\begin{equation*}
 \mathbb P(\{\sigma_0\le t\}) = \mathbb P(\{\sup_{0\le s\le t}|\xi_s|\ge n\}) \le \frac{\mathbb E|\xi_t|^3}{n^3} \textnormal{ and}
\end{equation*}
\begin{align*}
 \mathbb P(\{\sigma_n\le t\}) \le \mathbb P(\{\sup_{0\le s\le t}|M_{l-1}(s)|\ge n\}) 
 +\mathbb P(\{\sup_{0\le s\le t}|M_{l}(s)|\ge n\})
 \le \frac{\mathbb E|M_{l-1}(t)|^3 + \mathbb E|M_l(t)|^3}{n^3}
\end{align*}
so that \eqref{Menge} is proven once we know $\mathbb E|M_l(t)|^3<\infty$ for all l. We prove this inductively, starting from
$\mathbb E|M_1(t)|^3 = \mathbb E|\xi_t|^3<\infty$ and using the Burkholder inequality (e.g. theorem 73 in \cite{Protter}) in order to write $\mathbb E|M_{l}(t)|^3$ in terms of $\mathbb E|M_{l-1}(t)|^3$:
\begin{align*}
 \mathbb E|M_l(t)|^3 =& \mathbb E|\int_0^t a_sM_{l-1}(s)\textnormal d\xi_s|^3
 \le C \mathbb E [\int_0^t a_sM_{l-1}(s)\textnormal d\xi_s, \int_0^t a_sM_{l-1}(s)\textnormal d\xi_s]^\frac{3}{2} \\
 =& C \mathbb E \left(\int_0^t a_s^2M_{l-1}(s)^2\textnormal ds\right)^\frac{3}{2}
 \overset{(1)}{\le} C\sqrt{t} \mathbb E \int_0^t a_s^3M_{l-1}(s)^3\textnormal ds
 \le Ct^\frac{3}{2} \mathbb E \sup_{0\le s\le t}|M_{l-1}(s)|^3 \\
 \overset{(2)}{\le}& 8Ct^\frac{3}{2} \mathbb E |M_{l-1}(t)|^3
\end{align*}
$[\cdot,\cdot]$ denotes the quadratic variation and can be computed according to
\begin{equation*}
[\int_0^tX_s\textnormal d\xi_s, \int_0^tX_s\textnormal d\xi_s]=\int_0^tX_s^2\textnormal ds ,
\end{equation*}
in (1) we used Jensen's inequality and in (2) again Doob's inequality.

\bibliographystyle{plain}

\begin{thebibliography}{10}

\bibitem{Bassi}
A. Bassi.
\newblock Collapse models: analysis of the free particle dynamics.
\newblock {\em J. Phys. A}, 38(14):3173--3192, 2005.

\bibitem{Kolb}
A. Bassi, D. D{\"u}rr, and Martin Kolb.
\newblock On the long time behavior of free stochastic {S}chr\"odinger
  evolutions.
\newblock {\em Rev. Math. Phys.}, 22(1):55--89, 2010.

\bibitem{BassiG}
A. Bassi and G.C. Ghirardi.
\newblock Dynamical reduction models.
\newblock {\em Phys. Rep.}, 379(5-6):257--426, 2003.

\bibitem{Belavkin}
V.~P. Belavkin and P.~Staszewski.
\newblock A quantum particle undergoing continuous observation.
\newblock {\em Phys. Lett. A}, 140(7-8):359--362, 1989.

\bibitem{Bell}
J.~S. Bell.
\newblock {\em Speakable and unspeakable in quantum mechanics}.
\newblock Cambridge University Press, Cambridge, 1987.
\newblock Collected papers on quantum philosophy.

\bibitem{Breuer}
H.-P. Breuer and Francesco Petruccione.
\newblock {\em The theory of open quantum systems}.
\newblock Oxford University Press, New York, 2002.

\bibitem{Diosi1}
L.~Di{\'o}si.
\newblock Continuous quantum measurement and {I}t\^o formalism.
\newblock {\em Phys. Lett. A}, 129(8-9):419--423, 1988.

\bibitem{Diosi2}
L.~Di\'osi.
\newblock Models for universal reduction of macroscopic quantum fluctuations.
\newblock {\em Phys. Rev. A}, 40(3):1165--1174, Aug 1989.

\bibitem{Ghirardi1}
G.~C. Ghirardi, A.~Rimini, and T.~Weber.
\newblock Unified dynamics for microscopic and macroscopic systems.
\newblock {\em Phys. Rev. D (3)}, 34(2):470--491, 1986.

\bibitem{Gough}
J.~E. Gough, O.~O. Obrezkov, and O.~G. Smolyanov.
\newblock Randomized {H}amiltonian {F}eynman integrals and stochastic
  {S}chr\"odinger-{I}t\^o equations.
\newblock {\em Izv. Ross. Akad. Nauk Ser. Mat.}, 69(6):3--20, 2005.

\bibitem{Hackenbroch}
W. Hackenbroch and A. Thalmaier.
\newblock {\em Stochastische {A}nalysis}.
\newblock Mathematische Leitf\"aden. B. G. Teubner, Stuttgart, 1994.
\newblock Eine Einf{\"u}hrung in die Theorie der stetigen Semimartingale.

\bibitem{Holevo}
A.~S. Holevo.
\newblock On dissipative stochastic equations in a {H}ilbert space.
\newblock {\em Probab. Theory Related Fields}, 104(4):483--500, 1996.

\bibitem{Huang}
M.-J. Huang.
\newblock Commutators and invariant domains for {S}chr\"odinger propagators.
\newblock {\em Pacific J. Math.}, 175(1):83--91, 1996.

\bibitem{Joos}
E.~Joos, H.~D. Zeh, C.~Kiefer, D.~Giulini, J.~Kupsch, and I.-O. Stamatescu.
\newblock {\em Decoherence and the appearance of a classical world in quantum
  theory}.
\newblock Springer-Verlag, Berlin, second edition, 2003.

\bibitem{Kolokoltsov}
V.~N. Kolokol{\cprime}tsov.
\newblock Localization and analytic properties of the solutions of the simplest
  quantum filtering equation.
\newblock {\em Rev. Math. Phys.}, 10(6):801--828, 1998.

\bibitem{Kurtz}
Th.~G. Kurtz.
\newblock A random {T}rotter product formula.
\newblock {\em Proc. Amer. Math. Soc.}, 35:147--154, 1972.

\bibitem{Lindblad}
G.~Lindblad.
\newblock On the generators of quantum dynamical semigroups.
\newblock {\em Comm. Math. Phys.}, 48(2):119--130, 1976.

\bibitem{Mora}
C.~M. Mora and Rolando Rebolledo.
\newblock Regularity of solutions to linear stochastic {S}chr\"odinger
  equations.
\newblock {\em Infin. Dimens. Anal. Quantum Probab. Relat. Top.},
  10(2):237--259, 2007.

\bibitem{Mora2}
C.~M. Mora and Rolando Rebolledo.
\newblock Basic properties of nonlinear stochastic {S}chr\"odinger equations
  driven by {B}rownian motions.
\newblock {\em Ann. Appl. Probab.}, 18(2):591--619, 2008.

\bibitem{Penrose}
R. Penrose.
\newblock On gravity's role in quantum state reduction.
\newblock {\em Gen. Relativity Gravitation}, 28(5):581--600, 1996.

\bibitem{Protter}
Ph.~E. Protter.
\newblock {\em Stochastic integration and differential equations}, volume~21 of
  {\em Stochastic Modelling and Applied Probability}.
\newblock Springer-Verlag, Berlin, 2005.
\newblock Second edition. Version 2.1, Corrected third printing.

\bibitem{Tumulka}
R. Tumulka.
\newblock The point processes of the {GRW} theory of wave function collapse.
\newblock {\em Rev. Math. Phys.}, 21(2):155--227, 2009.

\end{thebibliography}

\def\cprime{$'$} \def\cprime{$'$}

\end{document}